\newcommand{\bfx}{\mathbf{x}}
\newcommand{\bfz}{\mathbf{z}}
\DeclareMathOperator{\FAPE}{FAPE}
\DeclareMathOperator{\BetaCarbons}{BetaCarbons}
\DeclareMathOperator{\Uniform}{Uniform}
\DeclareMathOperator{\RMSDAlign}{RMSDAlign}
\DeclareMathOperator{\HarmonicPrior}{HarmonicPrior}
\DeclareMathOperator{\AlphaFold}{AlphaFold}
\newcommand{\pluseq}{\mathrel{+}=}
\DeclareMathOperator{\Bin}{Bin}
\DeclareMathOperator{\Tr}{Tr}
\DeclareMathOperator{\Linear}{Linear}
\DeclareMathOperator{\RMWD}{RMWD}
\DeclareMathOperator{\GaussianFourierEmbedding}{GaussianFourierEmbedding}
\DeclareMathOperator{\length}{length}
\DeclareMathOperator{\OneHot}{OneHot}
\DeclareMathOperator{\TriangleAttentionStartingNode}{TriangleAttentionStartingNode}
\DeclareMathOperator{\TriangleAttentionEndingNode}{TriangleAttentionEndingNode}
\DeclareMathOperator{\TriangleMultiplicationOutgoing}{TriangleMultiplicationOutgoing}
\DeclareMathOperator{\TriangleMultiplicationIncoming}{TriangleMultiplicationIncoming}
\DeclareMathOperator{\PairTransition}{PairTransition}
\theoremstyle{plain}
\newtheorem{theorem}{Theorem}[section]
\newtheorem{proposition}[theorem]{Proposition}
\theoremstyle{definition}
\theoremstyle{remark}
\icmltitlerunning{AlphaFold Meets Flow Matching for Generating Protein Ensembles}
\begin{document}

\twocolumn[
\icmltitle{AlphaFold Meets Flow Matching for Generating Protein Ensembles}



\icmlsetsymbol{equal}{*}

\begin{icmlauthorlist}
\icmlauthor{Bowen Jing}{csail}
\icmlauthor{Bonnie Berger}{csail,math}
\icmlauthor{Tommi Jaakkola}{csail}
\end{icmlauthorlist}

\icmlaffiliation{csail}{CSAIL, Massachusetts Institute of Technology}
\icmlaffiliation{math}{Department of Mathematics, Massachusetts Institute of Technology}

\icmlcorrespondingauthor{Bowen Jing}{bjing@mit.edu}

\icmlkeywords{Machine Learning, ICML}

\vskip 0.3in
]



\printAffiliationsAndNotice{}
\begin{abstract}
The biological functions of proteins often depend on dynamic structural ensembles. In this work, we develop a flow-based generative modeling approach for learning and sampling the conformational landscapes of proteins. We repurpose highly accurate single-state predictors such as AlphaFold and ESMFold and fine-tune them under a custom flow matching framework to obtain sequence-conditioned \emph{generative} models of protein structure called Alpha\textsc{Flow} and ESM\textsc{Flow}. When trained and evaluated on the PDB, our method provides a superior combination of precision and diversity compared to AlphaFold with MSA subsampling. When further trained on ensembles from all-atom MD, our method accurately captures conformational flexibility, positional distributions, and higher-order ensemble observables for unseen proteins. Moreover, our method can diversify a static PDB structure with faster wall-clock convergence to certain equilibrium properties than replicate MD trajectories, demonstrating its potential as a proxy for expensive physics-based simulations. Code is available at \url{https://github.com/bjing2016/alphaflow}. 
\end{abstract}
\section{Introduction}
Proteins adopt complex three-dimensional structures, often as members of structural \emph{ensembles} with distinct states, collective motions, and disordered fluctuations, to carry out their biological functions. For example, conformational changes are critical in the function of transporters, channels, and enzymes, and the properties of equilibrium ensembles help govern the strength and selectivity of molecular interactions \citep{meller2023predicting, vogele2023functional}. 
While deep learning methods such as AlphaFold \citep{jumper2021highly} have excelled in the single-state modeling of experimental protein structures, they fail to account for this conformational heterogeneity \citep{lane2023protein, ourmazd2022structural}.
Hence, a method which builds upon the level of accuracy of single-structure predictors, but reveals underlying structural ensembles, would be of great value to structural biologists.

Existing machine learning approaches for generating structural ensembles have focused on inference-time interventions in AlphaFold that modify the multiple sequence alignment (MSA) input \citep{del2022sampling, stein2022speach_af, wayment2023predicting}, resulting in a different structure prediction for each version of the MSA. While these approaches have demonstrated some success, they suffer from two key limitations. First, by operating on the MSA, they cannot be generalized to structure predictors based on protein language models (PLMs) such as ESMFold \citep{lin2023evolutionary} or OmegaFold \citep{wu2022high}, which have grown in popularity due to their fast runtime and ease of use. Secondly, these inference-time interventions do not provide the capability to train on protein ensembles from beyond the PDB---for example, ensembles from molecular dynamics, which are of significant scientific interest but can be extremely expensive to simulate \citep{shaw2010atomic}.

To address these limitations, in this work we combine AlphaFold and ESMFold with \emph{flow matching}, a recent generative modeling framework \citep{lipman2022flow, albergo2022building}, to propose a principled method for sampling the conformational landscape of proteins. While AlphaFold and ESMFold were originally developed and trained as \emph{regression} models that predict a single best protein structure for a given MSA or sequence input, we develop a strategy for repurposing them as (sequence-conditioned) \emph{generative} models of protein structure. This synthesis relies on the key insight that iterative denoising frameworks (such as diffusion and flow-matching) provide a general recipe for converting regression models to generative models with relatively little modification to the architecture and training objective. Unlike inference-time MSA ablation, this strategy applies equally well to PLM-based predictors and can be used to train or fine-tune on arbitrary ensembles.
\begin{figure*}
    \centering
    \includegraphics[width=\textwidth]{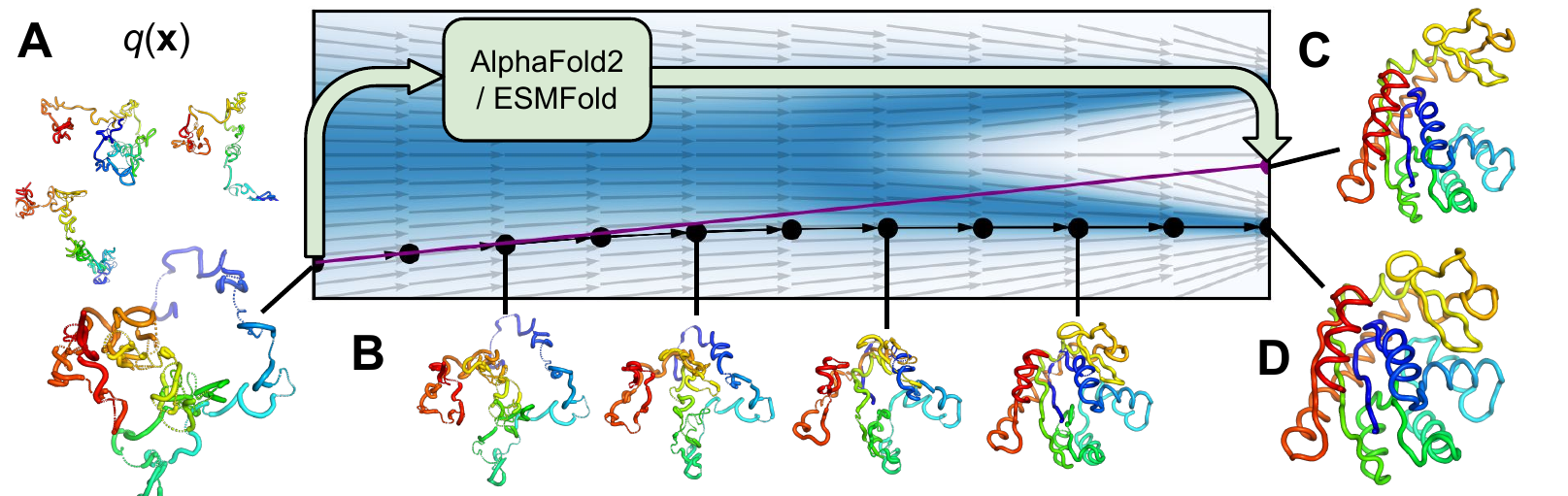}
    \vspace{-1.75\baselineskip}
    \caption{\textbf{Conceptual overview of Alpha\textsc{Flow} / ESM\textsc{Flow}.} (A) Samples are drawn from a harmonic (polymer-like) prior. (B) The sample is progressively refined or denoised under a flow field controlled by the structure prediction model (AlphaFold or ESMFold). (C) At each step, the denoised structure prediction parameterizes the direction of the flow and we interpolate the current sample towards it. (D) The final prediction is a sample from the learned distribution of structures.}
    \vspace{-0.75\baselineskip}
    \label{fig:overview}
\end{figure*}

While flow matching has been well established for images, its application to protein structures remains nascent \citep{bose2023se}. Hence, we develop a custom flow matching framework tailored to the architecture and training practices of AlphaFold and ESMFold. Our framework leverages the polymer-structured prior distribution from {harmonic diffusion} \citep{jing2023eigenfold}, but improves over it by defining a scale-invariant noising process resilient to missing and cropped residues. These improvements directly result from the increased modeling flexibility offered by flow matching and contribute to the performance of our method.

We demonstrate the performance of our flow-matching variants of AlphaFold and ESMFold---named Alpha\textsc{Flow} and ESM\textsc{Flow}---in two distinct settings. First, after fine-tuning these models only on structures from the PDB, we substantially surpass the precision-diversity Pareto frontier of MSA ablation baselines on a test set of recently deposited conformationally heterogeneous proteins. Second, we showcase the ability to learn from ensembles beyond the PDB by further training on the ATLAS dataset \citep{vander2023atlas} of molecular dynamics simulations. When evaluated on test proteins structurally dissimilar from the training set, Alpha\textsc{Flow} substantially surpasses the MSA baselines in the prediction of conformational flexibility, distributional modeling of atomic positions, and replication of higher-order ensemble observables such as intermittent contacts and solvent exposure. Furthermore, when a static PDB structure is provided as a template, sampling from Alpha\textsc{Flow} provides faster wall-clock convergence to many equilibrium properties than running molecular dynamics (MD) simulation starting from that structure. Thus, our method can be used in place of expensive simulations to diversify and obtain equilibrium ensembles of solved protein structures.

\section{Background}

\textbf{Protein structure prediction.} The modern approach for protein structure prediction was pioneered by AlphaFold \citep{jumper2021highly}, which takes as input (1) the protein sequence, (2) a MSA of evolutionarily related sequences, and optionally (3) a template structure of a related protein, and predicts the all-atom 3D coordinates of single protein structure. AlphaFold was developed and trained in an end-to-end fashion under a regression-like FAPE loss with structures from the PDB. Later works, such as ESMFold \citep{lin2023evolutionary} and OmegaFold \citep{wu2022high}, modified the pipeline by substituting the MSA with embeddings from a protein language model (PLM) and eschewing the template input, but otherwise kept the same architecture and training framework as AlphaFold.

\textbf{Modeling protein ensembles.} In the post-AlphaFold era, several works have emphasized diversifying highly accurate single-structure predictions to reflect underlying conformational heterogeneity \citep{lane2023protein,chakravarty2022alphafold2,saldano2022impact, xie2023can, brotzakis2023alphafold, bryant2023structure, porter2023colabfold}. Most prominently, \citet{del2022sampling} demonstrated that multiple functional states could be obtained by subsampling the MSA input to AlphaFold. Since then, MSA subsampling has become the \emph{de-facto} standard methodology and has been employed to study conformational states of kinases \citep{faezov2023alphafold2, herrington2023exploring, casadevall2023alphafold2}, variant effects on conformational states \citep{da2023predicting}, and to seed molecular dynamics simulations \citep{vani2023alphafold2}. Alternative approaches have also been proposed in the form of point mutations to the MSA \citep{stein2022speach_af, stein2023rosetta} and MSA clustering \citep{wayment2023predicting}. .

An emerging line of work seeks to directly train sequence-to-structure generative models of protein ensembles. EigenFold \citep{jing2023eigenfold} and Distributional Graphormer \citep{zheng2023towards} use harmonic diffusion and $SE(3)$ diffusion \citep{yim2023se}, respectively, to generate ensembles. SENS \citep{lu2023score} is a local generative model that diversifies single starting structures via local exploration of the conformational landscape. However, these models have yet to show convincing validations or comparisons with MSA subsampling methods on PDB test sets.

A related but separate line of work has focused on learning generative models of Boltzmann distributions as proxies for expensive molecular dynamics simulation. These models were initially conceived as normalizing flows that provided exact  likelihoods and thus a means to train with energies and reweigh samples at inference time \citep{noe2019boltzmann, kohler2021smooth, midgley2022flow, abdin2023pepflow, felardos2023designing}. However, these normalizing flows have proven difficult to scale beyond small molecules and toy systems. More recently, the proliferation of diffusion models has shifted the focus of this line of work towards scalability and generalization \citep{arts2023two, zheng2023towards} rather than exact likelihoods. Our method, when trained on MD ensembles, can be viewed as belonging to this new generation of Boltzmann-targeting generative models.

\textbf{Flow matching} \citep{lipman2022flow, albergo2022building, albergo2023stochastic, liu2022flow} is a generative modeling paradigm that resembles and builds upon the significant success of diffusion models \citep{ho2020denoising, song2021score} in image and molecule domains. The fundamental object in flow matching is a conditional probability path $p_t(\bfx \mid \bfx_1), t \in [0, 1]$: a family of densities conditioned on a data point $\bfx_1 \sim p_\text{data}$ which interpolates between a shared prior distribution $p_0(\bfx \mid \bfx_1) = q(\bfx)$ and an approximate Dirac $p_1(\bfx \mid \bfx_1) \approx \delta(\bfx - \bfx_1)$. Given a conditional vector field $u_t(\bfx \mid \bfx_1)$ that generates the time evolution of $p_t(\bfx \mid \bfx_1)$, one then learns the \emph{marginal vector field} with a neural network:
\begin{equation}\label{eq:flow_matching}
    {\hat v}(\bfx, t; \theta) \approx v(\bfx, t) := \mathbb{E}_{\bfx_1 \sim p_t(\bfx_1 \mid \bfx)}[u_t(\bfx \mid \bfx_1)]
\end{equation}
At convergence, the learned vector field $\hat v(\bfx,t;\theta)$ is a neural ODE that evolves the prior distribution $q(\bfx)$ to the data distribution $p_\text{data}(\bfx)$. Score-matching in diffusion models can be seen as a special case of flow matching; however, as discussed in Section~\ref{sec:diffusion}, flow matching circumvents certain difficulties that would otherwise arise with diffusion.

\section{Method}
\subsection{AlphaFold as a Denoising Model}

\begin{figure}
    \centering
    \includegraphics[width=\columnwidth]{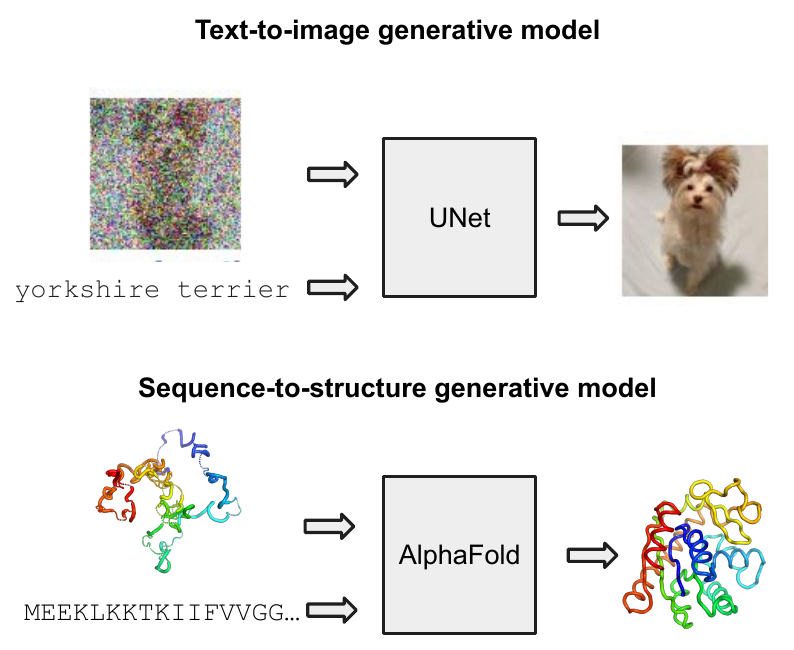}
    \vspace{-1.5\baselineskip}
    \caption{\textbf{AlphaFold as a denoising model.} Just as (diffusion-based) text-to-image generative models are simply neural networks that denoise images (with text input), a modified AlphaFold that ingests noisy structures and predicts clean structures (with sequence input) immediately provides a sequence-to-structure generative model---when trained under an appropriate framework.}
    \vspace{-1\baselineskip}
    \label{fig:af-as-denoising}
\end{figure}

Given a protein sequence $A$ of amino acid tokens, our objective is to model the distribution $p(\bfx \mid A)$ over 3D coordinates $\bfx \in \mathbb{R}^{3\times N}$ which represents the structural ensemble of that protein sequence. Considering the enormous intellectual efforts that went into a \emph{deterministic} sequence-to-structure model (i.e., AlphaFold), developing a distributional model of equivalent accuracy and generalization ability would appear to pose a considerable challenge. Our solution is to leverage recent conceptual advances in generative modeling in order to simply \emph{repurpose} AlphaFold---nearly out of the box---as a generative model. 

Consider, for example, the (simplified) architecture of prototypical text-to-image diffusion models \citep{ho2020denoising, rombach2022high}, which aim to model conditional distributions $p(\bfx \mid s)$ of images $\bfx$ conditioned on text prompt $s$. At the heart of these models lies a \emph{denoising neural network} (e.g., a UNet) which ingests a noisy image, along with a text prompt, to predict a clean image. Conditioned on these inputs, such models are otherwise are trained with simple, regression-like MSE objectives. Analogously, a protein structure predictor trained on a regression-like loss---like AlphaFold or ESMFold---can be converted to a denoising model simply by supplying an additional, \emph{noisy} structure input (Figure~\ref{fig:af-as-denoising}). Not coincidentally, this is reminiscent of the idea of \emph{template structures} employed by certain AlphaFold workflows. Thus, we develop an input embedding module very similar to AlphaFold's template embedding stack and prepend it to the pairwise folding trunks of AlphaFold and ESMFold (details in Appendix~\ref{app:input_embedder}). By doing so, we obtain structure denoising architectures that are thin wrappers around well-validated single-structure predictors.

With these architectural modifications, we are ready to plug AlphaFold and ESMFold into any iterative denoising-based generative modeling framework. Next, we will see how this concretely applies to flow matching for protein ensembles.

\subsection{Flow Matching for Protein Ensembles} \label{sec:flow_matching}
Designing a flow-matching generative framework amounts \clearpage to the choice of a conditional probability path $p_t(\bfx \mid \bfx_1)$ and its corresponding vector field $u_t(\bfx \mid \bfx_1)$. Inspired by the interpolant-based perspective on flow matching \citep{albergo2022building}, we define the conditional probability path by sampling noise $\bfx_0$ from the prior $q(\bfx_0)$ and interpolating linearly with the data point $\bfx_1$:
\begin{equation} \label{eq:fm_start}
    \bfx \mid \bfx_1, t = (1 - t) \cdot \bfx_0 + t \cdot \bfx_1, \quad \bfx_0 \sim q(\bfx_0)
\end{equation}
This probability path is associated with the vector field
\begin{equation}\label{eq:fm_vector_field}
    u_t(\bfx \mid \bfx_1) = (\bfx_1 - \bfx)/(1-t)
\end{equation}
which matches the CondOT path and field proposed in (for example) \citet{pooladian2023multisample}. Customarily, we then learn a neural network to approximate the marginal vector field according to Equation~\ref{eq:flow_matching}. However, if we instead define a neural network ${\hat{\bfx}_1}(\bfx, t;\theta)$ and reparameterize via 
\begin{equation} \label{eq:reparameterization}
    {\hat v}(\bfx, t; \theta) = ({\hat{\bfx}_1}(\bfx, t;\theta) - \bfx)/(1-t)
\end{equation}
then rearrangements of Equations~\ref{eq:flow_matching} and \ref{eq:reparameterization} reveal that we can equivalently learn the expectation of $\bfx_1$: 
\begin{equation}\label{eq:x1_prediction}
    {\hat \bfx}_1(\bfx, t;\theta) \approx \mathbb{E}_{\bfx_1 \sim p_t(\bfx_1 \mid \bfx)}[\bfx_1]
\end{equation}
This reparameterization is identical---up to the choice of probability path $p_t(\bfx_1 \mid \bfx)$---to that employed for image diffusion models \citep{ho2020denoising}. In our setting, since $\bfx_1$ refers to samples from the data distribution (i.e., protein structures), this allows the AlphaFold-based architectures discussed previously to be immediately used as the the denoising model ${\hat{\bfx}_1}(\bfx, t;\theta)$, with $\bfx$ as the noisy input and $t$ as an additional time embedding.

To apply flow matching to protein structures, we describe a structure by the 3D coordinates of its $\beta$-carbons ($\alpha$-carbon for glycine): $\bfx \in \mathbb{R}^{N \times 3}$. (We choose $\beta$-carbons because these are the inputs to the template embedding stack.) We then define the prior distribution $q(\bfx)$ over the positions of these $\beta$-carbons to be a \emph{harmonic prior} \citep{jing2023eigenfold}:
\begin{equation} \label{eq:fm_end}
    q(\bfx) \propto \exp\left[- \frac{\alpha}{2}\sum_{i=1}^{N-1} \lVert\bfx_i - \bfx_{i+1} \rVert^2 \right]
\end{equation}
This prior ensures that samples along the conditional probability path, and hence inputs to the neural network, always remain polymer-like, physically plausible 3D structures.

The parameterization of learning the conditional expectation of $\bfx_1$ (Equation~\ref{eq:x1_prediction}) suggests that the neural network should be trained with an MSE loss. However, there are several issues with this direct approach. (1) The structure prediction networks not only predict $\beta$-carbon coordinates, but also all-atom coordinates and residue frames. (2) The input to the network is $SE(3)$-invariant by design, which makes training with MSE loss unsuitable without further correction (Appendix~\ref{app:flow_matching}). Finally, (3) the networks obtain best performance (and were orginally trained) with the $SE(3)$-invariant Frame Aligned Point Error (FAPE) loss.  To reconcile these issues with the flow-matching framework, we redefine the space of protein structures to be the \emph{quotient} space $\mathbb{R}^{3 \times N}/SE(3)$, with the prior distribution projected to this space. We redefine the interpolation between two points in this space to be linear interpolation in $\mathbb{R}^3$ after RMSD-alignment. Further, because the quotient space is no longer a vector space, there is no longer a notion of ``expectation" of a distribution; instead, we aim to learn the more general Fr\'echet mean of the conditional distribution $p(\bfx_1 \mid \bfx)$:
\begin{equation}
    {\hat \bfx}_1(\bfx, t;\theta) \approx  \min_{{\hat\bfx}_1}  \mathbb{E}_{\bfx_1 \sim p_t(\bfx_1 \mid \bfx)}\left[\FAPE^2(\bfx_1, \hat\bfx_1) \right]
\end{equation}
where we leverage the property that FAPE is a valid metric \citep{jumper2021highly} to define a Fr\'echet mean. To learn this target, we use a training loss identical to the original FAPE, except now \emph{squared}. The final result for the training and inference procedures are provided in Algorithms~\ref{alg:training} and \ref{alg:inference}. An important implication of this modified framework is that while our model is faithfully supervised on all-atom coordinates, it technically is learning the \emph{distribution} only over $\beta$-carbon coordinates. These procedures and their subtleties are more fully discussed in Appendix~\ref{app:flow_matching}.

\begin{algorithm}[H]
\caption{\textsc{Training}}\label{alg:training}
\begin{algorithmic}
\STATE \textbf{Input:} Training examples of structures, sequences, and MSAs $\{(S_i, A_i, M_i)\}$
\FORALL{$(S_i, A_i, M_i)$}
\STATE Extract $\bfx_1 \gets \BetaCarbons(S_i)$ 
\STATE Sample $\bfx_0 \sim \HarmonicPrior(\length(A_i))$ 
\STATE Align $\bfx_0 \gets \RMSDAlign(\bfx_0, \bfx_1)$ 
\STATE Sample $t \sim \Uniform[0, 1]$ 
\STATE     Interpolate $\bfx_t \gets t \cdot \bfx_1 + (1-t) \cdot \bfx_0$ 
\STATE Predict $\hat S_i \gets \AlphaFold(A_i, M_i, \bfx_t, t)$ 
\STATE  Optimize loss $\mathcal{L} = \FAPE^2(\hat S_i, S_i)$ 
\ENDFOR
\end{algorithmic}
\end{algorithm}

\begin{algorithm}[H]
\caption{\textsc{Inference}}\label{alg:inference}
\begin{algorithmic}
\STATE \textbf{Input:} Sequence and MSA $(A, M)$
\STATE \textbf{Output:} Sampled all-atom structure $\hat S$
\STATE Sample $\bfx_0 \sim \HarmonicPrior(\length(A))$ 
\FOR{$n \gets 0$ to $N - 1$}
\STATE    Let $t \gets n / N$ and $s \gets t + 1/N$ \;
\STATE    Predict $\hat S \gets \AlphaFold(A, M, \bfx_t, t)$ \;
\IF{$n = N-1$}
\STATE \textbf{return} $\hat S$ 
\ENDIF
\STATE    Extract $\hat \bfx_1 \gets \BetaCarbons(\hat S)$ \;
\STATE    Align $\bfx_t \gets \RMSDAlign(\bfx_t, \hat \bfx_1)$ \;
\STATE    Interpolate $\bfx_s \gets \frac{s - t}{1-t} \cdot \hat \bfx_1 + \frac{1-s}{1-t} \cdot \bfx_t$ \;
\ENDFOR
\end{algorithmic}
\end{algorithm}

\subsection{Comparison with Diffusion}\label{sec:diffusion}
Since our flow matching framework involves defining and reversing a noising process, it bears a number of similarities with harmonic diffusion for protein structures \citep{jing2023eigenfold}, which converges to the same prior distribution. However, as a more general framework, flow matching offers two key advantages. First, harmonic diffusion converges to the prior distribution only in the infinite-time limit, and at a rate that \emph{depends on the data dimensionality}, i.e., protein size. This causes inference-time distributional shifts when training only on crops of relatively small size, as is the case with AlphaFold and ESMFold. On the other hand, in flow matching, the prior distribution is imposed as a boundary condition at time $t=0$ for all dimensionalities. Second, flow matching provides an easy means to deal with missing (gap) residues---which are very common in the PDB---by simply omitting them in the interpolation. In contrast, harmonic diffusion induces dependencies across atomic positions and hence requires data imputation for missing residues. We discuss these aspects (with additional theoretical results) further in Appendix~\ref{app:harmonic_diff}. 

\section{Experiments}

\begin{figure*}
    \centering
    \includegraphics[width=0.35\textwidth]{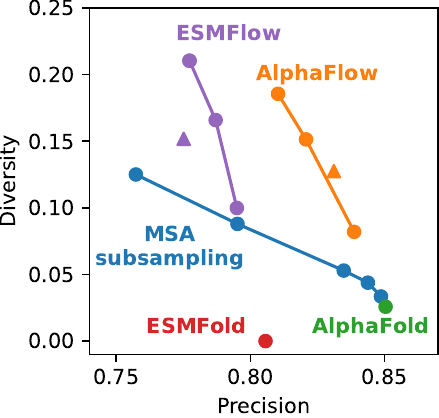} \hspace{12pt}
    \includegraphics[width=0.35\textwidth]{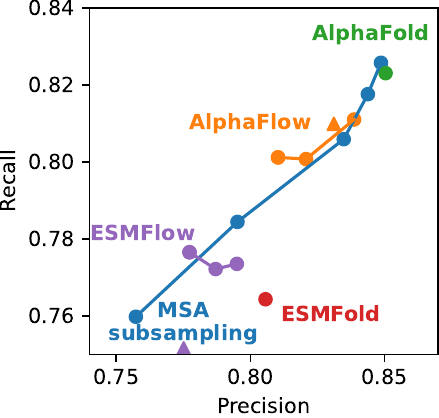} 
    \vspace{-\baselineskip}
    \caption{\textbf{Evaluation on PDB ensembles---}precision-diversity (\emph{left}) and precision-recall (\emph{right}) curves for all benchmarked methods (median taken over 100 test targets). The MSA subsampling curve is traced by reducing MSA depth (max 512, min 48) and joins to AlphaFold as they share the same weights (AlphaFold by default subsamples MSAs to a maximum depth of 1024 and thus has nonzero diveristy, unlike ESMFold). The Alpha\textsc{Flow} / ESM\textsc{Flow} curves are traced by truncating the initial steps of flow matching (described in Appendix~\ref{app:training_inference}). Distilled models are marked by $\blacktriangle$. Tabular data is shown in Appendix~\ref{app:pdb_results}, Table~\ref{tab:pdb_results}}
    \label{fig:pdb-results-main}
    \vspace{-\baselineskip}
\end{figure*}

\subsection{Training Regimen}

We fine-tune all weights of AlphaFold and ESMFold on the PDB with our flow matching framework, starting from their publicly available pretrained weights. We use OpenFold \citep{ahdritz2022openfold} for the architecture implementation and training pipeline and OpenProteinSet \citep{ahdritz2023openproteinset} for training MSAs. Adhering to the original works, we use a training cutoff of May 1, 2018 and May 1, 2020 for AlphaFold and ESMFold, respectively. At the conclusion of this stage of training (1.28M and 720k examples, respectively), we obtain flow-matching variants of AlphaFold and ESMFold which we call \textbf{Alpha\textsc{Flow}} and \textbf{ESM\textsc{Flow}}. 

Next, to demonstrate and assess the ability of our method to learn from MD ensembles, we continue fine-tuning both models on the ATLAS dataset of all-atom MD simulations \citep{vander2023atlas}, which consists of 1390 proteins chosen for structural diversity by ECOD domain classification \citep{schaeffer2017ecod}. Using training and validation cutoffs of May 1, 2018 and May 1, 2019, we obtain train/val/test splits of 1265/39/82 ensembles (2 excluded due to length). After 43k and 27k additional training examples, respectively, we obtain MD-specialized variants of our model which we call \textbf{Alpha\textsc{Flow}-MD} and \textbf{ESM\textsc{Flow}-MD}. We also train variants of these models (\textbf{+Templates}) which accept the PDB structure that initialized the simulation as input using a copy of the input embedding module.

Because flow matching is an iterative generative process, sampling a single structure requires many forward passes of AlphaFold and ESMFold (up to 10 in our experiments), which can be somewhat expensive. To accelerate this process, we explore variants of all models where the generative process is \emph{distilled} into a single forward pass (details in Appendix~\ref{app:training_inference}). While distillation has been explored for diffusion models \citep{salimans2022progressive, song2023consistency, yin2023one}, this is (to our knowledge) the first demonstration of distillation in a protein or flow-matching setting.

\subsection{PDB Ensembles}

We first examine the ability of Alpha\textsc{Flow} and ESM\textsc{Flow} to sample diverse conformations  of proteins deposited in the Protein Data Bank (PDB). To do, we construct a test set of 100 proteins deposited after the AlphaFold training cutoff (May 1, 2018) with multiple chains and evidence of conformational heterogeneity (details in Appendix~\ref{app:datasets}). For each protein, we sample 50 predictions with (1) unmodified AlphaFold/ESMFold (2) AlphaFold with varying degrees of MSA subsampling and (3) Alpha\textsc{Flow}/ESM\textsc{Flow}, with varying degrees of flow truncation in order to tune the amount of diversity (Appendix~\ref{app:training_inference}). Each set of predictions is evaluated on three metrics: \textbf{precision}---the average lDDT$_{\text{C}\alpha}$ from each prediction to the closest crystal structure; \textbf{recall}---the average lDDT$_{\text{C}\alpha}$ from each crystal structure to the closest prediction; and \textbf{diversity}---the average dissimilarity (1-lDDT$_{\text{C}\alpha}$) between pairs of predicted structures.

The median results across the 100 test targets are shown in Figure~\ref{fig:pdb-results-main}. Alpha\textsc{Flow}, similar to MSA subsampling, increases the prediction diversity relative to the unmodified AlphaFold at the cost of reduced precision. However, the variants of Alpha\textsc{Flow} trace a substantially superior Pareto frontier relative to MSA subsampling. In some cases, PCA of the ground truth and predicted ensembles (Appendix~\ref{app:pdb_results}) offers an explanation for this result: in MSA subsampling, the ensembles drift away from the true structures as the input signal is ablated, whereas the Alpha\textsc{Flow} predictions remain clustered around the ground truth conformations while reaching the same or greater levels of diversity. In terms of precision and recall, Alpha\textsc{Flow} exhibits very similar behavior to MSA subsampling. Somewhat surprisingly, neither method is able to meaningfully improve aggregate recall relative to baseline AlphaFold, showing that they generally do not succeed in increasing the coverage of experimentally determined PDB structures, or (more optimistically) that the predicted conformational changes have yet to be experimentally observed. Selected cases of conformational changes successfully modeled by Alpha\textsc{Flow} are visualized in Appendix~\ref{app:pdb_results}; Figure~\ref{fig:pdb_conf_changes}.

Overall (as expected), ESMFold and ESM\textsc{Flow} exhibit reduced precision relative to AlphaFold-family methods. However, ESM\textsc{Flow} is able to inject substantial diversity relative to baseline ESMFold---which, unlike AlphaFold, is completely deterministic---and increase the recall at little to no cost in precision. Note that this test set includes some proteins deposited before the ESMFold cutoff; results on a later sub-split are similar (Appendix~\ref{app:pdb_results}; Table~\ref{tab:pdb_results}).

\subsection{Molecular Dynamics Ensembles}

\begin{table*}[]
    \centering
    \caption{\textbf{Evaluation on MD ensembles}. For each method, we compare the predicted ensemble with the ground truth MD ensemble according to various metrics, detailed in the main text. For protein flexibility and RMSF, the ground truth values (from the MD ensembles) are in parenthesis. When applicable, the median across the 82 test ensembles is reported. See Appendix~\ref{app:md_results} for ESM\textsc{Flow} results. $r$: Pearson correlation; $\rho$: Spearman correlation; $J$: Jaccard similarity; $\mathcal{W}_2$: 2-Wasserstein distance.}
    \label{tab:md_results}
    \begin{small}
    \begin{tabular}{clccccccc|cc}
    \toprule
    & &\multicolumn{2}{c}{Alpha\textsc{Flow}-MD} & \multicolumn{4}{c}{MSA subsampling} & & \multicolumn{2}{c}{AFMD+Templates} \\ 
    \cmidrule(lr){3-4} \cmidrule(lr){5-8}  \cmidrule(lr){10-11}
    & & Full & Distilled & 32 & 48 & 64 & 256 & AlphaFold & Full & Distilled \\
    \midrule
    \multirow{5}{*}{\makecell{Predicting\\flexibility}}
    & Pairwise RMSD ($=$ 2.90) &       \textbf{2.89} &     1.94 &    4.40 &    2.34 &    1.67 &     0.72 &       0.58 &       2.18 & 1.73 \\
    & Pairwise RMSD $r$ $\uparrow$     &       \textbf{0.48} &     \textbf{0.48} &    0.03 &    0.12 &    0.22 &     0.15 &       0.10 &       0.94 & 0.92 \\
    & All-atom RMSF ($=$1.70) &       \textbf{1.68} &     1.28 &    5.38 &    2.29 &    1.17 &     0.49 &       0.31 &       1.31 & 1.00 \\
    & Global RMSF $r$ &       \textbf{0.60} &     0.54 &    0.13 &    0.23 &    0.29 &     0.26 &       0.21 &       0.91 & 0.89 \\
    & Per-target RMSF $r$      &       \textbf{0.85} &     0.81 &    0.51 &    0.52 &    0.51 &     0.55 &       0.52 &       0.90 & 0.88 \\
    \midrule
    \multirow{6}{*}{\makecell{Distributional\\accuracy}} 
    & Root mean $\mathcal{W}_2$-dist. $\downarrow$ &       \textbf{2.61} &     3.70 &    6.15 &    5.32 &    4.28 &     3.62 &       3.58 &       1.95 & 2.18 \\
    & $\hookrightarrow$ Translation contrib. $\downarrow$ &       \textbf{2.28} &     3.10 &    5.22 &    3.92 &    3.33 &     2.87 &       2.86 &       1.64 & 1.74 \\
    & $\hookrightarrow$ Variance contrib. $\downarrow$ &       \textbf{1.30} &     1.52 &    3.55 &    2.49 &    2.24 &     2.24 &       2.27 &       1.01 & 1.25 \\
    & MD PCA $\mathcal{W}_2$-dist. $\downarrow$      &       \textbf{1.52} &     1.73 &    2.44 &    2.30 &    2.23 &     1.88 &       1.99 &       1.25 & 1.41 \\
    & Joint PCA $\mathcal{W}_2$-dist. $\downarrow$ &       \textbf{2.25} &     3.05 &    5.51 &    4.51 &    3.57 &     3.02 &       2.86 &       1.58 & 1.68 \\
    & \% PC-sim $>0.5$ $\uparrow$      &       \textbf{44} &     34 &    15 &    18 &    21 &     21 &       23 &       44 & 43 \\
    \midrule
    \multirow{4}{*}{\makecell{Ensemble\\observables}}  
    & Weak contacts $J$ $\uparrow$      &       \textbf{0.62} &     0.52 &    0.40 &    0.40 &    0.37 &      0.30 &       0.27 &       0.62 & 0.51 \\
    & Transient contacts $J$ $\uparrow$ &       \textbf{0.41} &     0.28 &    0.23 &    0.26 &    0.27 &      0.27 &       0.28 &       0.47 & 0.42 \\
    & Exposed residue $J$ $\uparrow$    &       \textbf{0.50} &     0.48 &    0.34 &    0.37 &    0.37 &     0.33 &       0.32 &       0.50 & 0.47 \\
    & Exposed MI matrix $\rho$ $\uparrow$ &       \textbf{0.25} &     0.14 &    0.14 &    0.11 &    0.10 &     0.06 &       0.02 &       0.25 & 0.18 \\
    \bottomrule
    \end{tabular}
    \vspace{-\baselineskip}
    \end{small}
\end{table*}

We next evaluate the ability of Alpha\textsc{Flow} and ESM\textsc{Flow} to generate proxy MD ensembles for the 82 test proteins in the ATLAS database. These test proteins have minimal structural overlap with the training ensembles, providing a stringent test of generalization. For each target, we sample 250 predictions with each method and probe their similarity to the MD ensembles via a series of assessments, grouped under three broad categories of increasing difficulty: (1) predicting flexibility, (2) distributional accuracy, and (3) ensemble observables. Unless otherwise noted, we focus on Alpha\textsc{Flow} ensembles generated with MSA input alone (i.e., no PDB templates). Main results are presented in Table~\ref{tab:md_results} and Figure~\ref{fig:md_results}; further results (e.g. ESM\textsc{Flow}, comparisons with normal mode analysis, and ablations) and ensemble visualizations can be found in Appendix~\ref{app:md_results}. We note that our evaluations are inherently limited to phenomena accessible within the ATLAS simulation timescales; we do not assess if our model captures slower conformational changes, which remain a key area for future work. 

\textbf{Q1: Is ensemble flexibility predictive of true protein flexibility?} For each ensemble, we quantify the protein flexibility as the average C$\alpha$-RMSD between any pair of conformations. By this metric, the Alpha\textsc{Flow} ensembles have the strongest Pearson correlation with the ground truth and matches the aggregate level of diversity in the MD ensembles. In contrast, MSA subsampling is unable to reach the same level of diversity while retaining any predictive power. Similar results hold when considering atomic-level flexibility in terms of root mean square fluctuation (RMSF), both when pooled globally and pooled per-target. Remarkably, Alpha\textsc{Flow} attains a median Pearson correlation of 0.85 between modeled and predicted RMSFs within a target, while no level of MSA subsampling is able to meaningfully exceed baseline AlphaFold on this metric.

\textbf{Q2: Are the atomic positions distributionally accurate?} To generalize the all-atom RMSD metric to ensembles, define the \emph{root mean Wasserstein distance} (RMWD) between ensembles $\mathcal{X}, \mathcal{Y}$ as
\begin{equation}\label{eq:rmwd}
    \RMWD(\mathcal{X}, \mathcal{Y}) = \sqrt{\frac{1}{N} \sum_{i=1}^N \mathcal{W}_2^2\left(\mathcal{N}[\mathcal{X}_i], \mathcal{N}[\mathcal{Y}_i]\right)}
\end{equation} 
where $\mathcal{N}[\mathcal{X}_i]$ are 3D-Gaussians fit to the positional distribution of the $i$th atom in ensemble $\mathcal{X}$ (this reduces to RMSD with a single structure). By this metric, Alpha\textsc{Flow} ensembles are more accurate than any level of MSA subsampling. Decomposition of the RMWD into a translation contribution and variance contribution (Appendix~\ref{app:evaluation}) reveals that Alpha\textsc{Flow} slightly improves on AlphaFold in predicting the mean position of atoms, and substantially outperforms MSA subsampling in modeling the variance.

The joint distribution of C$\alpha$ positions reveals collective motions and provides a more stringent test of distributional accuracy. We project this joint distribution onto the first two principal components from PCA---computed from the MD ensemble alone or from equally weighting the MD and predicted ensembles---and compute the $\mathcal{W}_2$-distance (in units of \AA\ RMSD) between the predicted and true ensembles in this space. We also compute the (unsigned) \emph{cosine similarity} between the top principal components of the predicted and true ensembles and consider the dominant motion to be successfully modeled if this similarity $>0.5$. By all of these metrics, Alpha\textsc{Flow} markedly improves over MSA subsampling, and in particular nearly doubles the success rate for obtaining $>0.5$ cosine similarity.

\textbf{Q3: Are complex ensemble observables faithfully reproduced?} MD ensembles are often intended for downstream analysis of observables such as intermittent contacts and solvent exposure, often associated with thermal fluctuations around the low-energy crystal structure \citep{vogele2022systematic}. To probe if we model these properties accurately, for each ensemble we identify the set of \emph{weak contacts} and \emph{transient contacts}, defined as those C$\alpha$ pairs which are in contact (respectively, not in contact) in the crystal structure but which dissociate (respectively, associate) in $> 10\%$ of ensemble structures, with a 8 \AA\ threshold. We then compute the the Jaccard similarity of the sets produced by each method with the ground truth sets. We repeat the same analysis with the set of \emph{cryptically exposed residues}---those whose sidechains are buried in the crystal structure but exposed to solvent in $> 10\%$ of ensemble structures---which are a key feature in the identification of cryptic pockets in drug discovery \citep{meller2023predicting}. Going further, for each pair of residues we compute the \emph{mutual information} (MI) between their (binary) exposure states, yielding a MI matrix for each ensemble. Such matrices are an important in the so-called \emph{exposon analysis} of protein dynamics, e.g., for collective motions and allostery \citep{porter2019cooperative}. We then compute the Spearman correlation between the values of MI matrices from the MD and generated ensembles. Impressively, for all of these analyses, Alpha\textsc{Flow} substantially outperforms MSA subsampling; we emphasize that these are complex properties to emulate involving sidechains and different parts of the protein (Figure~\ref{fig:md_results} and Appendix~\ref{app:md_results}).

\begin{figure}[h]
    \centering
    \includegraphics[width=\columnwidth]{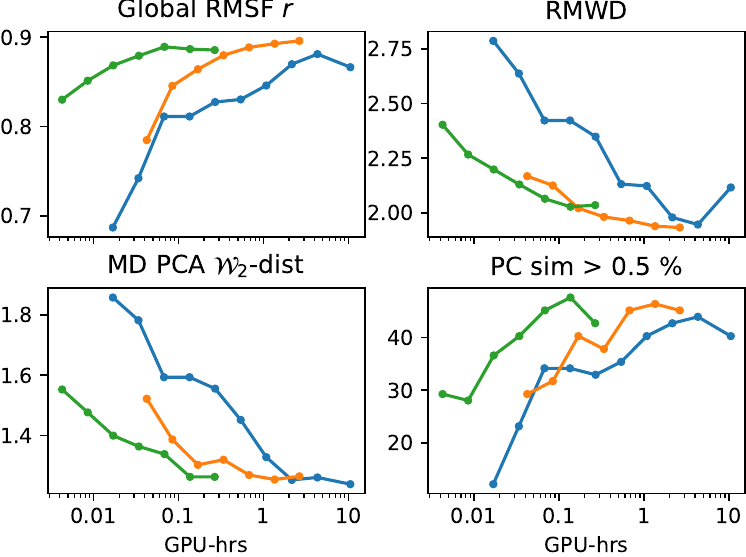}
    \vspace{-1.5\baselineskip}
    \caption{\textbf{Efficiency of Alpha\textsc{Flow} vs replicate MD simulations.} Alpha\textsc{Flow}+Templates with varying number of samples with distillation (\textcolor[HTML]{2ca02c}{green}) and without distillation (\textcolor[HTML]{ff7f0e}{orange}); MD with varying trajectory lengths in \textcolor[HTML]{1f77b4}{blue}. See Appendix~\ref{app:evaluation} for further experimental details and Appendix~\ref{app:md_results} for further results.}
    \label{fig:md-convergence}
    \vspace{-\baselineskip}
\end{figure}

\begin{figure*}[t]
    \centering
    \begin{tikzpicture}
    \node[anchor=south west,inner sep=0] (image) at (0,0) 
    {\includegraphics[width=\textwidth]{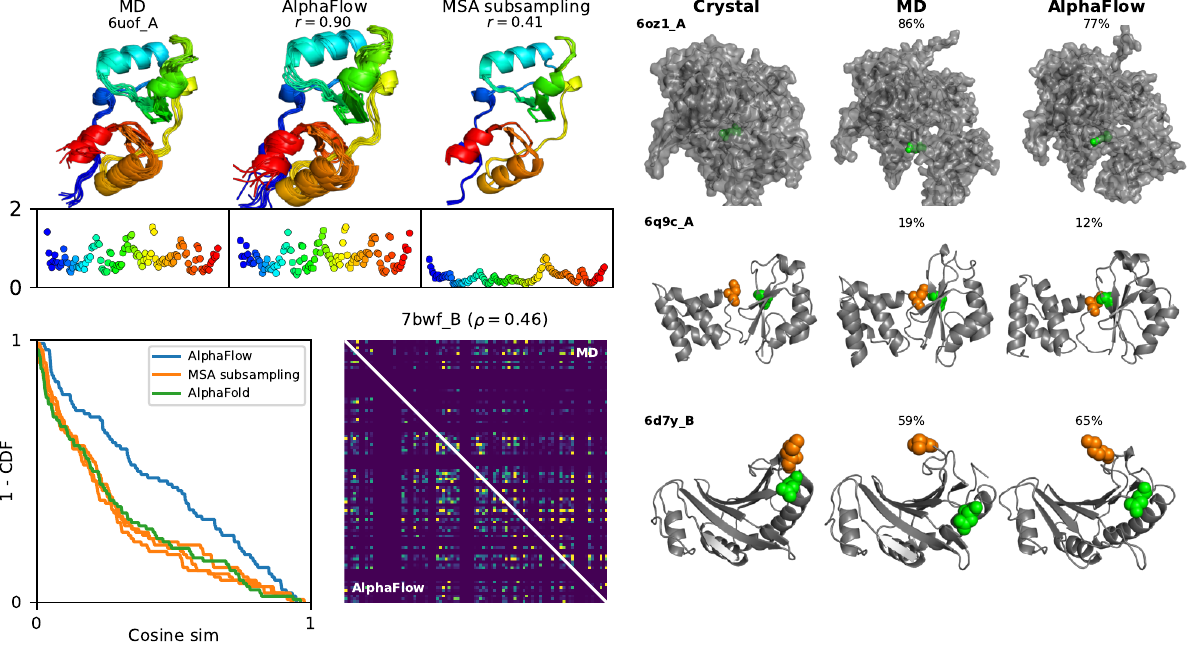}};
    \begin{scope}[x={(image.south east)},y={(image.north west)}]
    \node at (0.04,0.95) {\large  \bf \sffamily A};
    \node at (0.04,0.5) {\large \bf \sffamily B};
    \node at (0.3,0.5) {\large \bf \sffamily C};
    \node at (0.53,0.93) {\large \bf \sffamily D};
    \node at (0.53,0.61) {\large \bf \sffamily E};
    \node at (0.53,0.30) {\large \bf \sffamily F};
    \end{scope}
    \end{tikzpicture}
    
    \caption{\textbf{MD evaluations visualized.} (\textbf{A}) Ensembles of PDB ID \texttt{6uof\_A} (transcriptional regulator from \emph{Streptococcus pneumoniae}) from MD, Alpha\textsc{Flow}, and MSA subsampling (depth 48), with C$\alpha$ RMSF by residue index shown in insets. (\textbf{B}) $1-CDF$ of the distribution of (unsigned) cosine similarities between the top principal components of the predicted ensemble versus the MD ensemble. (\textbf{C}) Solvent exposure mutual information matrices computed from the ground truth MD ensemble and Alpha\textsc{Flow} ensemble for target PDB ID \texttt{7bwf\_B} (antitoxin from \emph{Staphylococcus aureus}). (\textbf{D, E, F}) Deviations from the crystal structure in the MD simulation, corresponding to ensemble observables, which are correctly sampled by Alpha\textsc{Flow}. The probability of occurence in each ensembles is shown. (\textbf{D}) solvent exposure of a buried residue in PDB ID \texttt{6oz1\_A} (carboxylate reductase from \emph{M. chelonae}). (\textbf{E}) association of a transient residue contact in PDB ID \texttt{6q9c\_A} (NADH-quinone oxidoreductase subunit E from \emph{Aquifex aeolicus}). (\textbf{F}) dissociation of a weak residue contact in PDB ID \texttt{6d7y\_B} (immune protein from \emph{Enterobacter cloacae}). Additional examples in Appendix~\ref{app:md_results} $r$: Pearson correlation; $\rho$: Spearman correlation.}
    \label{fig:md_results}
\end{figure*}

\textbf{Diversifying solved structures.} Although we have so far focused on generating protein ensembles without the use of experimental structures, there is substantial scientific interest in obtaining ensembles for specific \emph{solved} structures, often via molecular dynamics simulation \citep{hollingsworth2018molecular}. To investigate the utility of our method in this application setting, we repeat all experiments by providing the structure which initialized the ATLAS simulations to the +Templates version of Alpha\textsc{Flow}-MD. As expected, the resulting ensembles improve---sometimes substantially---in their similarity to the ground truth MD ensemble, vastly surpasssing the performance of MSA subsampling. However, in these settings, the appropriate baseline is \emph{replicate simulations} (provided in ATLAS) starting from the same structure rather than MSA subsampling. Since MD is taken to be the ground truth but is expensive to run to convergence, we investigate if Alpha\textsc{Flow} provides better results for an equivalent \emph{limited computational budget}, e.g., in terms of GPU-hrs. To emulate these budgets, we reduce the number of samples drawn from Alpha\textsc{Flow} (from 250 to as few as 4) and the length of the MD trajectory (100 ns--160 ps). As shown in Figure~\ref{fig:md-convergence}, the Alpha\textsc{Flow} ensembles retain much of their quality with up to a 10x reduction in samples, while the MD trajectories require much longer to converge to or surpass the same quality. The distilled Alpha\textsc{Flow} model, despite converging to a lower level of performance, provides an even greater improvement for short timescales by providing 10x as many samples for the same runtime. Thus, as measured by these metrics, Alpha\textsc{Flow} provides a more efficient means to study the thermodynamic fluctuations of existing solved structures than short MD simulations and holds promise for large-scale diversification of the PDB.

\section{Conclusion}
We have presented Alpha\textsc{Flow} and ESM\textsc{Flow}, which combine AlphaFold and ESMFold with flow-matching towards the goal of sampling protein ensembles. Compared to existing approaches for obtaining multiple structure predictions, our method goes beyond \emph{inference-time} input modifications and develops a more principled \emph{training-time} approach to modeling structural diversity. Comprehensive experimental results demonstrate the utility and performance of our method in predicting precise and diverse PDB structures and replicating distributions and properties of MD ensembles, both with and without initial experimental structures. We anticipate these capabilities to have broad and exciting applications for structure biology. Further, with the increasing availability of high-resolution cryo-EM data \citep{kuhlbrandt2014resolution} and algorithms for resolving their structural heterogeneity \citep{zhong2021cryodrgn}, we anticipate the paradigm of generative training of AlphaFold and ESMFold to have further applications beyond the  settings considered here. 

\clearpage

\section*{Acknowledgements}
We thank Ruochi Zhang, Hannes St{\"a}rk, Samuel Sledzieski, Soojung Yang, Jason Yim, Rachel Wu, Hannah Wayment-Steele, Umesh Padia, Sergey Ovchinnikov, Andrew Campbell, Gabriele Corso, Jeremy Wohlwend, Mateo Reveiz, Martin V{\"o}gele, Daniel Richman, Jessica Karaguesian, Alex Powers, and anonymous ICML reviewers for helpful feedback and discussions. 

This work was supported by the National Institute of General Medical Sciences of the National Institutes of Health under award number 1R35GM141861-01 and the U.S. Department of Energy, Office of Science, Office of Advanced Scientific Computing Research, Department of Energy Computational Science Graduate Fellowship under Award Number DE-SC0022158. We acknowledge support from NSF Expeditions grant (award 1918839: Collaborative Research: Understanding the World Through Code), the Machine Learning for Pharmaceutical Discovery and Synthesis (MLPDS) consortium, the Abdul Latif Jameel Clinic for Machine Learning in Health, the DTRA Discovery of Medical Countermeasures Against New and Emerging (DOMANE) threats program,  and the DARPA Accelerated Molecular Discovery program. This research used resources of the National Energy Research Scientific Computing Center (NERSC), a Department of Energy Office of Science User Facility using NERSC awards ASCR-ERCAP0027302 and ASCR-ERCAP0027818.

\bibliography{references}

\begin{thebibliography}{67}
\providecommand{\natexlab}[1]{#1}
\providecommand{\url}[1]{\texttt{#1}}
\expandafter\ifx\csname urlstyle\endcsname\relax
  \providecommand{\doi}[1]{doi: #1}\else
  \providecommand{\doi}{doi: \begingroup \urlstyle{rm}\Url}\fi

\bibitem[Abdin \& Kim(2023)Abdin and Kim]{abdin2023pepflow}
Abdin, O. and Kim, P.~M.
\newblock Pepflow: direct conformational sampling from peptide energy
  landscapes through hypernetwork-conditioned diffusion.
\newblock \emph{bioRxiv}, pp.\  2023--06, 2023.

\bibitem[Ahdritz et~al.(2022)Ahdritz, Bouatta, Kadyan, Xia, Gerecke,
  O’Donnell, Berenberg, Fisk, Zanichelli, Zhang, et~al.]{ahdritz2022openfold}
Ahdritz, G., Bouatta, N., Kadyan, S., Xia, Q., Gerecke, W., O’Donnell, T.~J.,
  Berenberg, D., Fisk, I., Zanichelli, N., Zhang, B., et~al.
\newblock Openfold: Retraining alphafold2 yields new insights into its learning
  mechanisms and capacity for generalization.
\newblock \emph{bioRxiv}, pp.\  2022--11, 2022.

\bibitem[Ahdritz et~al.(2023)Ahdritz, Bouatta, Kadyan, Jarosch, Berenberg,
  Fisk, Watkins, Ra, Bonneau, and AlQuraishi]{ahdritz2023openproteinset}
Ahdritz, G., Bouatta, N., Kadyan, S., Jarosch, L., Berenberg, D., Fisk, I.,
  Watkins, A.~M., Ra, S., Bonneau, R., and AlQuraishi, M.
\newblock Openproteinset: Training data for structural biology at scale.
\newblock \emph{arXiv preprint arXiv:2308.05326}, 2023.

\bibitem[Albergo \& Vanden-Eijnden(2022)Albergo and
  Vanden-Eijnden]{albergo2022building}
Albergo, M.~S. and Vanden-Eijnden, E.
\newblock Building normalizing flows with stochastic interpolants.
\newblock In \emph{The Eleventh International Conference on Learning
  Representations}, 2022.

\bibitem[Albergo et~al.(2023)Albergo, Boffi, and
  Vanden-Eijnden]{albergo2023stochastic}
Albergo, M.~S., Boffi, N.~M., and Vanden-Eijnden, E.
\newblock Stochastic interpolants: A unifying framework for flows and
  diffusions.
\newblock \emph{arXiv preprint arXiv:2303.08797}, 2023.

\bibitem[Arts et~al.(2023)Arts, Garcia~Satorras, Huang, Zugner, Federici,
  Clementi, No{\'e}, Pinsler, and van~den Berg]{arts2023two}
Arts, M., Garcia~Satorras, V., Huang, C.-W., Zugner, D., Federici, M.,
  Clementi, C., No{\'e}, F., Pinsler, R., and van~den Berg, R.
\newblock Two for one: Diffusion models and force fields for coarse-grained
  molecular dynamics.
\newblock \emph{Journal of Chemical Theory and Computation}, 19\penalty0
  (18):\penalty0 6151--6159, 2023.

\bibitem[Atilgan et~al.(2001)Atilgan, Durell, Jernigan, Demirel, Keskin, and
  Bahar]{atilgan2001anisotropy}
Atilgan, A.~R., Durell, S., Jernigan, R.~L., Demirel, M.~C., Keskin, O., and
  Bahar, I.
\newblock Anisotropy of fluctuation dynamics of proteins with an elastic
  network model.
\newblock \emph{Biophysical journal}, 80\penalty0 (1):\penalty0 505--515, 2001.

\bibitem[Bahar et~al.(1997)Bahar, Atilgan, and Erman]{bahar1997direct}
Bahar, I., Atilgan, A.~R., and Erman, B.
\newblock Direct evaluation of thermal fluctuations in proteins using a
  single-parameter harmonic potential.
\newblock \emph{Folding and Design}, 2\penalty0 (3):\penalty0 173--181, 1997.

\bibitem[Bakan et~al.(2011)Bakan, Meireles, and Bahar]{bakan2011prody}
Bakan, A., Meireles, L.~M., and Bahar, I.
\newblock Prody: protein dynamics inferred from theory and experiments.
\newblock \emph{Bioinformatics}, 27\penalty0 (11):\penalty0 1575--1577, 2011.

\bibitem[Bose et~al.(2023)Bose, Akhound-Sadegh, Fatras, Huguet, Rector-Brooks,
  Liu, Nica, Korablyov, Bronstein, and Tong]{bose2023se}
Bose, A.~J., Akhound-Sadegh, T., Fatras, K., Huguet, G., Rector-Brooks, J.,
  Liu, C.-H., Nica, A.~C., Korablyov, M., Bronstein, M., and Tong, A.
\newblock Se (3)-stochastic flow matching for protein backbone generation.
\newblock \emph{arXiv preprint arXiv:2310.02391}, 2023.

\bibitem[Brotzakis et~al.(2023)Brotzakis, Zhang, and
  Vendruscolo]{brotzakis2023alphafold}
Brotzakis, Z.~F., Zhang, S., and Vendruscolo, M.
\newblock Alphafold prediction of structural ensembles of disordered proteins.
\newblock \emph{bioRxiv}, pp.\  2023--01, 2023.

\bibitem[Bryant(2023)]{bryant2023structure}
Bryant, P.
\newblock Structure prediction of alternative protein conformations.
\newblock \emph{bioRxiv}, pp.\  2023--09, 2023.

\bibitem[Casadevall et~al.(2023)Casadevall, Duran, and
  Osuna]{casadevall2023alphafold2}
Casadevall, G., Duran, C., and Osuna, S.
\newblock Alphafold2 and deep learning for elucidating enzyme conformational
  flexibility and its application for design.
\newblock \emph{JACS Au}, 3\penalty0 (6):\penalty0 1554--1562, 2023.

\bibitem[Chakravarty \& Porter(2022)Chakravarty and
  Porter]{chakravarty2022alphafold2}
Chakravarty, D. and Porter, L.~L.
\newblock Alphafold2 fails to predict protein fold switching.
\newblock \emph{Protein Science}, 31\penalty0 (6):\penalty0 e4353, 2022.

\bibitem[Chen \& Lipman(2023)Chen and Lipman]{chen2023riemannian}
Chen, R.~T. and Lipman, Y.
\newblock Riemannian flow matching on general geometries.
\newblock \emph{arXiv preprint arXiv:2302.03660}, 2023.

\bibitem[Chen et~al.(2022)Chen, Zhang, and Hinton]{chen2022analog}
Chen, T., Zhang, R., and Hinton, G.
\newblock Analog bits: Generating discrete data using diffusion models with
  self-conditioning.
\newblock \emph{arXiv preprint arXiv:2208.04202}, 2022.

\bibitem[da~Silva et~al.(2023)da~Silva, Cui, Dalgarno, Lisi, and
  Rubenstein]{da2023predicting}
da~Silva, G.~M., Cui, J.~Y., Dalgarno, D.~C., Lisi, G.~P., and Rubenstein,
  B.~M.
\newblock Predicting relative populations of protein conformations without a
  physics engine using alphafold2.
\newblock \emph{bioRxiv}, 2023.

\bibitem[Dana et~al.(2019)Dana, Gutmanas, Tyagi, Qi, O’Donovan, Martin, and
  Velankar]{dana2019sifts}
Dana, J.~M., Gutmanas, A., Tyagi, N., Qi, G., O’Donovan, C., Martin, M., and
  Velankar, S.
\newblock Sifts: updated structure integration with function, taxonomy and
  sequences resource allows 40-fold increase in coverage of structure-based
  annotations for proteins.
\newblock \emph{Nucleic acids research}, 47\penalty0 (D1):\penalty0 D482--D489,
  2019.

\bibitem[Del~Alamo et~al.(2022)Del~Alamo, Sala, Mchaourab, and
  Meiler]{del2022sampling}
Del~Alamo, D., Sala, D., Mchaourab, H.~S., and Meiler, J.
\newblock Sampling alternative conformational states of transporters and
  receptors with alphafold2.
\newblock \emph{Elife}, 11:\penalty0 e75751, 2022.

\bibitem[Diepeveen et~al.(2023)Diepeveen, Esteve-Yag{\"u}e, Lellmann,
  {\"O}ktem, and Sch{\"o}nlieb]{diepeveen2023riemannian}
Diepeveen, W., Esteve-Yag{\"u}e, C., Lellmann, J., {\"O}ktem, O., and
  Sch{\"o}nlieb, C.-B.
\newblock Riemannian geometry for efficient analysis of protein dynamics data.
\newblock \emph{arXiv preprint arXiv:2308.07818}, 2023.

\bibitem[Ellaway et~al.(2023)Ellaway, Anyango, Nair, Zaki, Nadzirin, Powell,
  Gutmanas, Varadi, and Velankar]{ellaway2023identifying}
Ellaway, J.~I., Anyango, S., Nair, S., Zaki, H.~A., Nadzirin, N., Powell,
  H.~R., Gutmanas, A., Varadi, M., and Velankar, S.
\newblock Identifying protein conformational states in the pdb and comparison
  to alphafold2 predictions.
\newblock \emph{bioRxiv}, pp.\  2023--07, 2023.

\bibitem[Faezov \& Dunbrack~Jr(2023)Faezov and
  Dunbrack~Jr]{faezov2023alphafold2}
Faezov, B. and Dunbrack~Jr, R.~L.
\newblock Alphafold2 models of the active form of all 437
  catalytically-competent typical human kinase domains.
\newblock \emph{bioRxiv}, pp.\  2023--07, 2023.

\bibitem[Felardos et~al.(2023)Felardos, H{\'e}nin, and
  Charpiat]{felardos2023designing}
Felardos, L., H{\'e}nin, J., and Charpiat, G.
\newblock Designing losses for data-free training of normalizing flows on
  boltzmann distributions.
\newblock \emph{arXiv preprint arXiv:2301.05475}, 2023.

\bibitem[Herrington et~al.(2023)Herrington, Stein, Li, Pandey, and
  Schlessinger]{herrington2023exploring}
Herrington, N.~B., Stein, D., Li, Y.~C., Pandey, G., and Schlessinger, A.
\newblock Exploring the druggable conformational space of protein kinases using
  ai-generated structures.
\newblock \emph{bioRxiv}, pp.\  2023--08, 2023.

\bibitem[Ho et~al.(2020)Ho, Jain, and Abbeel]{ho2020denoising}
Ho, J., Jain, A., and Abbeel, P.
\newblock Denoising diffusion probabilistic models.
\newblock \emph{Advances in neural information processing systems},
  33:\penalty0 6840--6851, 2020.

\bibitem[Hollingsworth \& Dror(2018)Hollingsworth and
  Dror]{hollingsworth2018molecular}
Hollingsworth, S.~A. and Dror, R.~O.
\newblock Molecular dynamics simulation for all.
\newblock \emph{Neuron}, 99\penalty0 (6):\penalty0 1129--1143, 2018.

\bibitem[Jing et~al.(2023)Jing, Erives, Pao-Huang, Corso, Berger, and
  Jaakkola]{jing2023eigenfold}
Jing, B., Erives, E., Pao-Huang, P., Corso, G., Berger, B., and Jaakkola, T.~S.
\newblock Eigenfold: Generative protein structure prediction with diffusion
  models.
\newblock In \emph{ICLR 2023-Machine Learning for Drug Discovery workshop},
  2023.

\bibitem[Jumper et~al.(2021)Jumper, Evans, Pritzel, Green, Figurnov,
  Ronneberger, Tunyasuvunakool, Bates, {\v{Z}}{\'\i}dek, Potapenko,
  et~al.]{jumper2021highly}
Jumper, J., Evans, R., Pritzel, A., Green, T., Figurnov, M., Ronneberger, O.,
  Tunyasuvunakool, K., Bates, R., {\v{Z}}{\'\i}dek, A., Potapenko, A., et~al.
\newblock Highly accurate protein structure prediction with alphafold.
\newblock \emph{Nature}, 596\penalty0 (7873):\penalty0 583--589, 2021.

\bibitem[K{\"o}hler et~al.(2021)K{\"o}hler, Kr{\"a}mer, and
  No{\'e}]{kohler2021smooth}
K{\"o}hler, J., Kr{\"a}mer, A., and No{\'e}, F.
\newblock Smooth normalizing flows.
\newblock \emph{Advances in Neural Information Processing Systems},
  34:\penalty0 2796--2809, 2021.

\bibitem[K{\"u}hlbrandt(2014)]{kuhlbrandt2014resolution}
K{\"u}hlbrandt, W.
\newblock The resolution revolution.
\newblock \emph{Science}, 343\penalty0 (6178):\penalty0 1443--1444, 2014.

\bibitem[Lane(2023)]{lane2023protein}
Lane, T.~J.
\newblock Protein structure prediction has reached the single-structure
  frontier.
\newblock \emph{Nature Methods}, 20\penalty0 (2):\penalty0 170--173, 2023.

\bibitem[Larkin et~al.(2007)Larkin, Blackshields, Brown, Chenna, McGettigan,
  McWilliam, Valentin, Wallace, Wilm, Lopez, et~al.]{larkin2007clustal}
Larkin, M.~A., Blackshields, G., Brown, N.~P., Chenna, R., McGettigan, P.~A.,
  McWilliam, H., Valentin, F., Wallace, I.~M., Wilm, A., Lopez, R., et~al.
\newblock Clustal w and clustal x version 2.0.
\newblock \emph{bioinformatics}, 23\penalty0 (21):\penalty0 2947--2948, 2007.

\bibitem[Lin et~al.(2023)Lin, Akin, Rao, Hie, Zhu, Lu, Smetanin, Verkuil,
  Kabeli, Shmueli, et~al.]{lin2023evolutionary}
Lin, Z., Akin, H., Rao, R., Hie, B., Zhu, Z., Lu, W., Smetanin, N., Verkuil,
  R., Kabeli, O., Shmueli, Y., et~al.
\newblock Evolutionary-scale prediction of atomic-level protein structure with
  a language model.
\newblock \emph{Science}, 379\penalty0 (6637):\penalty0 1123--1130, 2023.

\bibitem[Lipman et~al.(2022)Lipman, Chen, Ben-Hamu, Nickel, and
  Le]{lipman2022flow}
Lipman, Y., Chen, R.~T., Ben-Hamu, H., Nickel, M., and Le, M.
\newblock Flow matching for generative modeling.
\newblock In \emph{The Eleventh International Conference on Learning
  Representations}, 2022.

\bibitem[Liu et~al.(2022)Liu, Gong, and Liu]{liu2022flow}
Liu, X., Gong, C., and Liu, Q.
\newblock Flow straight and fast: Learning to generate and transfer data with
  rectified flow.
\newblock \emph{arXiv preprint arXiv:2209.03003}, 2022.

\bibitem[Lu et~al.(2023)Lu, Zhong, and Tang]{lu2023score}
Lu, J., Zhong, B., and Tang, J.
\newblock Score-based enhanced sampling for protein molecular dynamics.
\newblock \emph{arXiv preprint arXiv:2306.03117}, 2023.

\bibitem[McGibbon et~al.(2015)McGibbon, Beauchamp, Harrigan, Klein, Swails,
  Hern{\'a}ndez, Schwantes, Wang, Lane, and Pande]{mcgibbon2015mdtraj}
McGibbon, R.~T., Beauchamp, K.~A., Harrigan, M.~P., Klein, C., Swails, J.~M.,
  Hern{\'a}ndez, C.~X., Schwantes, C.~R., Wang, L.-P., Lane, T.~J., and Pande,
  V.~S.
\newblock Mdtraj: a modern open library for the analysis of molecular dynamics
  trajectories.
\newblock \emph{Biophysical journal}, 109\penalty0 (8):\penalty0 1528--1532,
  2015.

\bibitem[Meller et~al.(2023)Meller, Ward, Borowsky, Kshirsagar, Lotthammer,
  Oviedo, Ferres, and Bowman]{meller2023predicting}
Meller, A., Ward, M., Borowsky, J., Kshirsagar, M., Lotthammer, J.~M., Oviedo,
  F., Ferres, J.~L., and Bowman, G.~R.
\newblock Predicting locations of cryptic pockets from single protein
  structures using the pocketminer graph neural network.
\newblock \emph{Nature Communications}, 14\penalty0 (1):\penalty0 1177, 2023.

\bibitem[Midgley et~al.(2022)Midgley, Stimper, Simm, Sch{\"o}lkopf, and
  Hern{\'a}ndez-Lobato]{midgley2022flow}
Midgley, L.~I., Stimper, V., Simm, G.~N., Sch{\"o}lkopf, B., and
  Hern{\'a}ndez-Lobato, J.~M.
\newblock Flow annealed importance sampling bootstrap.
\newblock \emph{arXiv preprint arXiv:2208.01893}, 2022.

\bibitem[No{\'e} et~al.(2019)No{\'e}, Olsson, K{\"o}hler, and
  Wu]{noe2019boltzmann}
No{\'e}, F., Olsson, S., K{\"o}hler, J., and Wu, H.
\newblock Boltzmann generators: Sampling equilibrium states of many-body
  systems with deep learning.
\newblock \emph{Science}, 365\penalty0 (6457):\penalty0 eaaw1147, 2019.

\bibitem[Ourmazd et~al.(2022)Ourmazd, Moffat, and
  Lattman]{ourmazd2022structural}
Ourmazd, A., Moffat, K., and Lattman, E.~E.
\newblock Structural biology is solved—now what?
\newblock \emph{Nature methods}, 19\penalty0 (1):\penalty0 24--26, 2022.

\bibitem[Pooladian et~al.(2023)Pooladian, Ben-Hamu, Domingo-Enrich, Amos,
  Lipman, and Chen]{pooladian2023multisample}
Pooladian, A.-A., Ben-Hamu, H., Domingo-Enrich, C., Amos, B., Lipman, Y., and
  Chen, R.
\newblock Multisample flow matching: Straightening flows with minibatch
  couplings.
\newblock \emph{arXiv preprint arXiv:2304.14772}, 2023.

\bibitem[Porter et~al.(2019)Porter, Moeder, Sibbald, Zimmerman, Hart,
  Greenberg, and Bowman]{porter2019cooperative}
Porter, J.~R., Moeder, K.~E., Sibbald, C.~A., Zimmerman, M.~I., Hart, K.~M.,
  Greenberg, M.~J., and Bowman, G.~R.
\newblock Cooperative changes in solvent exposure identify cryptic pockets,
  switches, and allosteric coupling.
\newblock \emph{Biophysical Journal}, 116\penalty0 (5):\penalty0 818--830,
  2019.

\bibitem[Porter et~al.(2023)Porter, Chakravarty, Schafer, and
  Chen]{porter2023colabfold}
Porter, L.~L., Chakravarty, D., Schafer, J.~W., and Chen, E.~A.
\newblock Colabfold predicts alternative protein structures from single
  sequences, coevolution unnecessary for af-cluster.
\newblock \emph{bioRxiv}, pp.\  2023--11, 2023.

\bibitem[Rombach et~al.(2022)Rombach, Blattmann, Lorenz, Esser, and
  Ommer]{rombach2022high}
Rombach, R., Blattmann, A., Lorenz, D., Esser, P., and Ommer, B.
\newblock High-resolution image synthesis with latent diffusion models.
\newblock In \emph{Proceedings of the IEEE/CVF conference on computer vision
  and pattern recognition}, pp.\  10684--10695, 2022.

\bibitem[Salda{\~n}o et~al.(2022)Salda{\~n}o, Escobedo, Marchetti, Zea,
  Mac~Donagh, Velez~Rueda, Gonik, Garc{\'\i}a~Melani, Novomisky~Nechcoff,
  Salas, et~al.]{saldano2022impact}
Salda{\~n}o, T., Escobedo, N., Marchetti, J., Zea, D.~J., Mac~Donagh, J.,
  Velez~Rueda, A.~J., Gonik, E., Garc{\'\i}a~Melani, A., Novomisky~Nechcoff,
  J., Salas, M.~N., et~al.
\newblock Impact of protein conformational diversity on alphafold predictions.
\newblock \emph{Bioinformatics}, 38\penalty0 (10):\penalty0 2742--2748, 2022.

\bibitem[Salimans \& Ho(2022)Salimans and Ho]{salimans2022progressive}
Salimans, T. and Ho, J.
\newblock Progressive distillation for fast sampling of diffusion models.
\newblock \emph{arXiv preprint arXiv:2202.00512}, 2022.

\bibitem[Schaeffer et~al.(2017)Schaeffer, Liao, Cheng, and
  Grishin]{schaeffer2017ecod}
Schaeffer, R.~D., Liao, Y., Cheng, H., and Grishin, N.~V.
\newblock Ecod: new developments in the evolutionary classification of domains.
\newblock \emph{Nucleic acids research}, 45\penalty0 (D1):\penalty0 D296--D302,
  2017.

\bibitem[Shaw et~al.(2010)Shaw, Maragakis, Lindorff-Larsen, Piana, Dror,
  Eastwood, Bank, Jumper, Salmon, Shan, et~al.]{shaw2010atomic}
Shaw, D.~E., Maragakis, P., Lindorff-Larsen, K., Piana, S., Dror, R.~O.,
  Eastwood, M.~P., Bank, J.~A., Jumper, J.~M., Salmon, J.~K., Shan, Y., et~al.
\newblock Atomic-level characterization of the structural dynamics of proteins.
\newblock \emph{Science}, 330\penalty0 (6002):\penalty0 341--346, 2010.

\bibitem[Song et~al.(2021)Song, Sohl-Dickstein, Kingma, Kumar, Ermon, and
  Poole]{song2021score}
Song, Y., Sohl-Dickstein, J., Kingma, D.~P., Kumar, A., Ermon, S., and Poole,
  B.
\newblock Score-based generative modeling through stochastic differential
  equations.
\newblock In \emph{International Conference on Learning Representations}, 2021.

\bibitem[Song et~al.(2023)Song, Dhariwal, Chen, and
  Sutskever]{song2023consistency}
Song, Y., Dhariwal, P., Chen, M., and Sutskever, I.
\newblock Consistency models.
\newblock \emph{arXiv preprint arXiv:2303.01469}, 2023.

\bibitem[St{\"a}rk et~al.(2023)St{\"a}rk, Jing, Barzilay, and
  Jaakkola]{stark2023harmonic}
St{\"a}rk, H., Jing, B., Barzilay, R., and Jaakkola, T.
\newblock Harmonic self-conditioned flow matching for multi-ligand docking and
  binding site design.
\newblock \emph{arXiv preprint arXiv:2310.05764}, 2023.

\bibitem[Stein \& Mchaourab(2022)Stein and Mchaourab]{stein2022speach_af}
Stein, R.~A. and Mchaourab, H.~S.
\newblock Speach\_af: Sampling protein ensembles and conformational
  heterogeneity with alphafold2.
\newblock \emph{PLOS Computational Biology}, 18\penalty0 (8):\penalty0
  e1010483, 2022.

\bibitem[Stein \& Mchaourab(2023)Stein and Mchaourab]{stein2023rosetta}
Stein, R.~A. and Mchaourab, H.~S.
\newblock Rosetta energy analysis of alphafold2 models: Point mutations and
  conformational ensembles.
\newblock \emph{bioRxiv}, pp.\  2023--09, 2023.

\bibitem[Steinegger \& S{\"o}ding(2017)Steinegger and
  S{\"o}ding]{steinegger2017mmseqs2}
Steinegger, M. and S{\"o}ding, J.
\newblock Mmseqs2 enables sensitive protein sequence searching for the analysis
  of massive data sets.
\newblock \emph{Nature biotechnology}, 35\penalty0 (11):\penalty0 1026--1028,
  2017.

\bibitem[Tancik et~al.(2020)Tancik, Srinivasan, Mildenhall, Fridovich-Keil,
  Raghavan, Singhal, Ramamoorthi, Barron, and Ng]{tancik2020fourier}
Tancik, M., Srinivasan, P., Mildenhall, B., Fridovich-Keil, S., Raghavan, N.,
  Singhal, U., Ramamoorthi, R., Barron, J., and Ng, R.
\newblock Fourier features let networks learn high frequency functions in low
  dimensional domains.
\newblock \emph{Advances in Neural Information Processing Systems},
  33:\penalty0 7537--7547, 2020.

\bibitem[Vander~Meersche et~al.(2023)Vander~Meersche, Cretin, Gheeraert, Gelly,
  and Galochkina]{vander2023atlas}
Vander~Meersche, Y., Cretin, G., Gheeraert, A., Gelly, J.-C., and Galochkina,
  T.
\newblock Atlas: protein flexibility description from atomistic molecular
  dynamics simulations.
\newblock \emph{Nucleic Acids Research}, pp.\  gkad1084, 2023.

\bibitem[Vani et~al.(2023)Vani, Aranganathan, Wang, and
  Tiwary]{vani2023alphafold2}
Vani, B.~P., Aranganathan, A., Wang, D., and Tiwary, P.
\newblock Alphafold2-rave: From sequence to boltzmann ranking.
\newblock \emph{Journal of Chemical Theory and Computation}, 2023.

\bibitem[V{\"o}gele et~al.(2022)V{\"o}gele, Thomson, Truong, McAvity,
  Zachariae, and Dror]{vogele2022systematic}
V{\"o}gele, M., Thomson, N.~J., Truong, S.~T., McAvity, J., Zachariae, U., and
  Dror, R.~O.
\newblock Systematic analysis of biomolecular conformational ensembles with
  pensa.
\newblock \emph{arXiv preprint arXiv:2212.02714}, 2022.

\bibitem[V{\"o}gele et~al.(2023)V{\"o}gele, Zhang, Kaindl, and
  Wang]{vogele2023functional}
V{\"o}gele, M., Zhang, B.~W., Kaindl, J., and Wang, L.
\newblock Is the functional response of a receptor determined by the
  thermodynamics of ligand binding?
\newblock \emph{Journal of Chemical Theory and Computation}, 2023.

\bibitem[Wayment-Steele et~al.(2023)Wayment-Steele, Ojoawo, Otten, Apitz,
  Pitsawong, H{\"o}mberger, Ovchinnikov, Colwell, and
  Kern]{wayment2023predicting}
Wayment-Steele, H.~K., Ojoawo, A., Otten, R., Apitz, J.~M., Pitsawong, W.,
  H{\"o}mberger, M., Ovchinnikov, S., Colwell, L., and Kern, D.
\newblock Predicting multiple conformations via sequence clustering and
  alphafold2.
\newblock \emph{Nature}, pp.\  1--3, 2023.

\bibitem[Wu et~al.(2022)Wu, Ding, Wang, Shen, Zhang, Luo, Su, Wu, Xie, Berger,
  et~al.]{wu2022high}
Wu, R., Ding, F., Wang, R., Shen, R., Zhang, X., Luo, S., Su, C., Wu, Z., Xie,
  Q., Berger, B., et~al.
\newblock High-resolution de novo structure prediction from primary sequence.
\newblock \emph{BioRxiv}, pp.\  2022--07, 2022.

\bibitem[Xie \& Huang(2023)Xie and Huang]{xie2023can}
Xie, T. and Huang, J.
\newblock Can protein structure prediction methods capture alternative
  conformations of membrane proteins?
\newblock \emph{bioRxiv}, pp.\  2023--08, 2023.

\bibitem[Yim et~al.(2023)Yim, Trippe, De~Bortoli, Mathieu, Doucet, Barzilay,
  and Jaakkola]{yim2023se}
Yim, J., Trippe, B.~L., De~Bortoli, V., Mathieu, E., Doucet, A., Barzilay, R.,
  and Jaakkola, T.
\newblock Se (3) diffusion model with application to protein backbone
  generation.
\newblock \emph{arXiv preprint arXiv:2302.02277}, 2023.

\bibitem[Yin et~al.(2023)Yin, Gharbi, Zhang, Shechtman, Durand, Freeman, and
  Park]{yin2023one}
Yin, T., Gharbi, M., Zhang, R., Shechtman, E., Durand, F., Freeman, W.~T., and
  Park, T.
\newblock One-step diffusion with distribution matching distillation.
\newblock \emph{arXiv preprint arXiv:2311.18828}, 2023.

\bibitem[Zheng et~al.(2023)Zheng, He, Liu, Shi, Lu, Feng, Ju, Wang, Zhu, Min,
  et~al.]{zheng2023towards}
Zheng, S., He, J., Liu, C., Shi, Y., Lu, Z., Feng, W., Ju, F., Wang, J., Zhu,
  J., Min, Y., et~al.
\newblock Towards predicting equilibrium distributions for molecular systems
  with deep learning.
\newblock \emph{arXiv preprint arXiv:2306.05445}, 2023.

\bibitem[Zhong et~al.(2021)Zhong, Bepler, Berger, and Davis]{zhong2021cryodrgn}
Zhong, E.~D., Bepler, T., Berger, B., and Davis, J.~H.
\newblock Cryodrgn: reconstruction of heterogeneous cryo-em structures using
  neural networks.
\newblock \emph{Nature methods}, 18\penalty0 (2):\penalty0 176--185, 2021.

\end{thebibliography}
\bibliographystyle{icml2024}

\newpage
\appendix
\onecolumn

\section{Method Details}\label{app:algorithms}

\subsection{Input Embedding Module} \label{app:input_embedder}
Algorithm~\ref{alg:embedding} outlines the architecture of the input embedding module which we attach to AlphaFold and ESMFold to form Alpha\textsc{Flow} and ESM\textsc{Flow}, respectively. The output of the module is added to the input to the Evoformer or folding trunk. The various subroutines are as defined in AlphaFold \citep{jumper2021highly}, whereas the Gaussian Fourier time embeddings are as previously used in \citet{song2021score, tancik2020fourier}. For brevity, we have omitted droupout layers. 

\begin{algorithm}[H]
\caption{\textsc{InputEmbedding}}\label{alg:embedding}
\begin{algorithmic}
\STATE \textbf{Input:} {Beta carbon coordinates $\bfx \in \mathbb{R}^{N \times 3}$, time $t\in[0,1]$}
\STATE \textbf{Output:} {Input pair embedding $\bfz \in \mathbb{R}^{N \times N \times 64}$}
\STATE $\bfz_{ij} \gets \lVert \bfx_i - \bfx_j \rVert $ 
\STATE $\bfz_{ij} \gets \Bin(\bfz_{ij}, {\min}=3.25 \text{ \AA}, {\max}= 50.75 \text{ \AA},  N_\text{bins}=39)$ 
\STATE $\bfz_{ij} \gets \Linear(\OneHot(\bfz_{ij}))$ 
\FOR{$l \gets 1$ to $N_\text{blocks} = 4$}
\STATE    $\{\bfz\}_{ij} \pluseq \TriangleAttentionStartingNode({\bfz_{ij}}, c = 64, N_\text{head} = 4)$   
\STATE    $\{\bfz\}_{ij} \pluseq \TriangleAttentionEndingNode({\bfz_{ij}}, c = 64, N_\text{head} = 4))$ 
\STATE    $\{\bfz\}_{ij} \pluseq \TriangleMultiplicationOutgoing({\bfz_{ij}}, c = 64)$   
\STATE    $\{\bfz\}_{ij} \pluseq \TriangleMultiplicationIncoming({\bfz_{ij}}, c = 64)$ 
\STATE    $\{\bfz\}_{ij} \pluseq \PairTransition({\bfz_{ij}}, n = 2)$ 
\ENDFOR
\STATE $\bfz_{ij} \pluseq \Linear(\GaussianFourierEmbedding(t, d=256))$
\end{algorithmic}
\end{algorithm}

\subsection{Flow Matching on Protein Ensembles} \label{app:flow_matching}

In this subsection, we describe how the final training and inferences Algorithms~\ref{alg:training} and \ref{alg:inference} are obtained, starting from the Euclidean flow matching procedure from a harmonic prior provided in Section~\ref{sec:flow_matching}, Equations~\ref{eq:fm_start}--\ref{eq:fm_end}. We note that other diffusion or flow matching formulations are also possible and  leave further exploration of this design space to future work.

\paragraph{Unsuitability of MSE Loss} In standard flow matching over $\mathbb{R}^{3N}$, the denoising network $\hat\bfx(\bfx, t; \theta)$ is designed to approximate ${\hat \bfx}_1(\bfx, t;\theta) \approx \mathbb{E}_{\bfx_1 \sim p_t(\bfx_1 \mid \bfx)}[\bfx_1]$, which gives rise to the MSE training objective
\begin{equation}\begin{aligned}
    \mathcal{L}_t(\theta) &= \mathbb{E}_{\bfx_1 \sim p_\text{data}, \bfx \sim p_t(\bfx \mid \bfx_1)}\left[\lVert \hat\bfx_1(\bfx, t; \theta) - \bfx_1 \rVert^2 \right] \\
    &= \mathbb{E}_{\bfx_1 \sim p_\text{data}, \bfx_0 \sim q}\left[\lVert \hat\bfx_1(\bfx, t; \theta) - \bfx_1 \rVert^2 \right]
\end{aligned}\end{equation}
where $\bfx = (1-t)\cdot \bfx_0 + t \cdot \bfx_1$, for each time $t \in [0, 1]$. The harmomic prior density $q$ and the data distribution $p_\text{data}$ are $SE(3)$-invariant (technically $SO(3)$-invariant after centering; see for example \citet{yim2023se}). This means that, for each training pair $(\bfx_0, \bfx_1)$, there is a corresponding uniform density over $R \in SO(3)$ supplying examples $(R.\bfx_0, R.\bfx_1)$:
\begin{equation}
    \mathcal{L}_t(\theta) = \mathbb{E}_{\bfx_1 \sim p_\text{data}, \bfx_0 \sim q}\left[\int_{SO(3)} \lVert \hat\bfx_1(R.\bfx, t; \theta) - R.\bfx_1 \rVert^2 \, dR\right] 
\end{equation}
However, because the input embedding takes only a distogram of $\bfx$, the denoising model $\hat\bfx_1$, i.e., AlphaFold or ESMFold, is $SE(3)$-invariant, meaning that
\begin{equation}
    \hat\bfx_1(\bfx, t; \theta) = \hat\bfx_1(R.\bfx, t; \theta)
\end{equation}
for any $R \in SO(3) \subset SE(3)$. Hence, the denoising network is tasked with predicting $R.\bfx_1$ despite having no access to $R$. This is impossible and would lead to the network to degenerately predict $\hat \bfx_1 = \boldsymbol{0}$, showing that the MSE loss (or more broadly any non-$SE(3)$-invariant) loss is unsuitable with a $SE(3)$-invariant denoising network.

\paragraph{Flow Matching on the Quotient Space} To resolve the issue that AlphaFold and ESMFold are insensitive to $SE(3)$ transformations of the input, we consider flow matching over the quotient space $\mathbb{R}^{3N}/SE(3)$, such that inputs related by $SE(3)$ transformations are now defined to be identical. This quotient space, when defined with suitable care \citep{diepeveen2023riemannian}, gives a non-Euclidean, Riemannian manifold. The harmonic prior and data distribution can be straightforwardly projected to this space by taking the $SE(3)$ equivalency classes of each data point. The theory of flow matching over Riemmanian manifolds was developed by \citet{chen2023riemannian} and closely follows standard flow matching, except the conditional vector fields and the learned marginal vector fields are elements of the tangent space:
\begin{equation}
    u_t(\bfx \mid \bfx_1) \in T_\bfx\mathcal{M}, \quad
    {\hat v}(\bfx, t; \theta) := \mathbb{E}_{\bfx_1 \sim p_t(\bfx_1 \mid \bfx)}[u_t(\bfx \mid \bfx_1)] \in T_\bfx\mathcal{M}
\end{equation}
As in the Euclidean case, to develop a flow matching process, we require a conditional probability path and a corresponding conditional vector field. \citet{chen2023riemannian} propose to generalize the CondOT probability path by defining the interpolant $\psi_t(\bfx_0 \mid \bfx_1)$ to be the geodesic from $\bfx_0$ to $\bfx_1$, and then specifying $p_t(\bfx \mid \bfx_0)$ via 
\begin{equation}
    \bfx \mid \bfx_1 = \psi_t(\bfx_0 \mid \bfx_1), \quad \bfx_0 \sim q(\bfx_0)
\end{equation}
and the associated conditional vector field as
\begin{equation}
    u_t(\bfx \mid \bfx_1) = \frac{d}{dt}\psi_t(\bfx_0 \mid \bfx_1)
\end{equation}
Once the marginal vector field is learned, inference is performed by integrating the corresponding ODE over the manifold. To use this framework with protein structures and AlphaFold or ESMFold as the flow model, we make the following tweaks:

(1) We construct the interpolation between two elements in the quotient space $\mathbb{R}^{3N}/SE(3)$ to be given by RMSD alignment in the ambient space $\mathbb{R}^{3N}$, followed by linear interpolation in ambient space. Thus, as employed in Algorithm~\ref{alg:training}, the conditional probability path is sampled via
\begin{equation}\begin{aligned}
    \bfx_0 &\sim q(\bfx_0) \\
    \bfx_0 &\gets \RMSDAlign(\bfx_0, \bfx_1) \\
    \bfx \mid \bfx_1 &= (1-t) \cdot \bfx_0 + t \cdot \bfx_1
\end{aligned}\end{equation}
(2) Similar to the Euclidean case, we consider a reparameterization (cf. Equation~\ref{eq:reparameterization}) which allows a denoising model $\hat\bfx_1(\bfx, t; \theta)$ such as AlphaFold or ESMFold to give the direction of the learned marginal flow:
\begin{equation}
    \hat v(\bfx, t; \theta) = \frac{\log_\bfx \hat\bfx_1(\bfx, t; \theta)}{1-t}
\end{equation}
where the logarithmic map gives the direction of the interpolation connecting $\bfx$ to $\hat\bfx_1(\bfx, t; \theta)$ (discussed next). Unlike the Euclidean case, however, this expression does not provide a simple training objective for $\bfx_1$ in terms of a denoising loss. This is because flow matching requires minimizing error in the tangent space, which may not be easily related to distances on the manifold. Nevertheless, we posit that a model which minimizes denoising error can do a good job of implicitly learning the vector field. Thus, for some choice of distance metric $d$ over the manifold, we aim to learn the so-called Fr\'echet mean of the clean data distribution conditioned on noisy data:
\begin{align}\label{eq:frechet_mean}
     \hat\bfx_1(\bfx, t; \theta) \approx \arg\min_{\hat\bfx\in\mathcal{M}} \mathbb{E}_{\bfx_1 \sim p_t(\bfx_1 \mid \bfx)}\left[d^2(\bfx_1, \hat\bfx_1)\right]
\end{align}
As a sanity check, note that when $\mathcal{M}$ is a Euclidean space and $d$ is Euclidean distance, $d^2$ reduces to the usual MSE denoising loss whose minimizer is the conditional expectation of $p_t(\bfx_1 \mid \bfx)$, in agreement with Equation~\ref{eq:x1_prediction}.

(3) At inference time, in lieu of repeatedly evaluating the logarithmic map and integrating the vector field with the exponential map, we observe that such a procedure amounts to moving along the interpolant connecting $\bfx$ to $\hat\bfx_1$:
\begin{equation}
    \exp_\bfx\left[\hat v(\bfx, t; \theta) \, dt\right] = \exp_\bfx \left[\frac{dt}{1-t} \log_\bfx \hat\bfx_1(\bfx, t; \theta)\right]
\end{equation}
i.e., a fraction $dt/(1-t)$ towards $\hat\bfx_1$. Hence, we take an integration step at inference-time via RMSD alignment followed by linear interpolation in ambient space, as executed in Algorithm~\ref{alg:inference}. 

\paragraph{FAPE and All-Atom Structure} As defined in Section~\ref{sec:flow_matching}, our flow matching framework operates over residue-level structures; specifically, over C$\beta$ coordinates $\bfx \in \mathbb{R}^{3N}$. However, the FAPE loss is defined over structures also containing (1) all-atom positions and (2) residue frames, and indeed we continue to supervise these outputs to ensure that Alpha\textsc{Flow} and ESM\textsc{Flow} produce meaningful all-atom structures. To reconcile these views, let $\mathcal{S}$ denote an all-atom structure, let $\left[\,\cdot\,\right]_{\text{C}\beta}$ be the operator that extracts the C$\beta$ coordinates, and denote the denoising model as $\hat {\mathcal{S}}(\bfx, t;\theta)$. Most of training and inference proceeds as if all structures were passed through the $\left[\,\cdot\,\right]_{\text{C}\beta}$ operator: training points are sampled via $\bfx_1 = \left[\mathcal{S}\right]_{\text{C}\beta}$ before noisy interpolation; and inference proceeds by parameterizing the C$\beta$ denoising model as
\begin{equation}
    \hat\bfx_1(\bfx, t; \theta) = \left[\hat{\mathcal{S}}(\bfx, t;\theta)\right]_{\text{C}\beta}
\end{equation}
However, this extraction is \emph{not} applied to compute the denoising loss---neither to the sampled data nor the prediction. Instead, the denoising model is trained to approximate (cf. Equation~\ref{eq:frechet_mean}):
\begin{equation}
    \hat{\mathcal{S}}(\bfx, t; \theta) \approx \arg\min_{\hat{\mathcal{S}}} \mathbb{E}_{\mathcal{S} \mid \bfx}\left[\FAPE^2(\mathcal{S}, \hat{\mathcal{S}})\right] 
\end{equation}
and thus the reparameterized C$\beta$ denoising model becomes
\begin{equation}
    \hat\bfx_1(\bfx, t; \theta) \approx \left[\arg\min_{\hat{\mathcal{S}}} \mathbb{E}_{\mathcal{S} \mid \bfx}\left[\FAPE^2(\mathcal{S}, \hat{\mathcal{S}})\right] \right]_{\text{C}\beta}
\end{equation}
Colloquially, this means that the denoised C$\beta$ structure (towards which we interpolate at inference time) is the C$\beta$ \emph{part of the best all-atom prediction}, rather than the best C$\beta$ prediction. In the final inference step, rather than extracting $\hat\bfx_1$ from $\hat{\mathcal{S}}$ and interpolating the rest of the way towards it, we simply return the all-atom structure $\hat{\mathcal{S}}$. However, the model is predicting the denoised all-atom structure from the C$\beta$ structure alone, and there is no iterative refinement of the non C$\beta$ components. Hence, our model is best thought of as a generative model over C$\beta$ positions \emph{only}, which additionally fills in the all-atom information to minimize the FAPE loss conditioned on the input C$\beta$ positions.

\subsection{Comparison with Harmonic Diffusion} \label{app:harmonic_diff}
In harmonic diffusion, as in flow matching, a conditional probability path $p(\bfx_t \mid \bfx_0)$ represents a noising process for the data point $\bfx_0$ ($t=0$ for the data by diffusion convention). Unlike flow matching, the path is given by the transition (or perturbation) kernel of a (Markovian) diffusion process rather than interpolation with the noise. The stationary distribution of the diffusion is the noisy prior by construction; however, the probability path converges to this prior only in the infinite-time limit. Instead, the maximum time is chosen such that the KL-divergence between the $p_{t\mid 0}$ and the stationary distribution is acceptably low. Unfortunately, in harmonic diffusion:
\begin{equation}
    D_\text{KL}(p_{t\mid 0} || p_\infty) = \sum_{i=1}^{3n} \left[e^{-\lambda_i t} \left(E_i - \frac{1}{2}\right) - \frac{1}{2}\log \left(1 - e^{-\lambda_i t}\right)\right]
\end{equation}
where $E_i$ is the (roughly constant) amount of energy in the $i$th mode (Equation~3 in \citet{jing2023eigenfold}). That is, the rate of convergence not only depends on the number of dimensions, but---more problematically---the smallest eigenvalue $\lambda_i$ of the diffusion drift matrix, which becomes smaller for larger proteins. Hence, it becomes tricky to train a time-conditioned denoising model for proteins of arbitrary size. In the case of Alpha\textsc{Flow} and ESM\textsc{Flow} trained on crops of 256, the model would not be able to denoise longer proteins from an intermediate state at which the crops have converged to noise, but the entire protein has not---such states have never been seen during training. Our flow matching framework instead imposes the noisy prior as a boundary condition at the same $t=0$ \emph{for all protein lengths and crops}, avoiding this issue.

While the fixed convergence time is a desirable quality, our flow matching framework---at least as defined in Equations~\ref{eq:fm_start}--\ref{eq:fm_end}---satisfies an even stronger property, which we call \emph{crop invariance} (Proposition~\ref{prop:crop_invariance}). Colloquially, this means that the marginal distribution of a crop of length $M$ at time $t$ is the same as if it were noised independently as an intact sequence of length $M$. This property ensures the noisy distributions over isolated crops seen at training time are exactly the same as those seen at inference time, when those crops are embedded in full-size proteins. 

\begin{proposition}\label{prop:crop_invariance}
Let $\bfx_1 \in \mathbb{R}^N$ and $\bfx_1 \verb|[i:i+M]| \in \mathbb{R}^M$ be a crop of $\bfx_1$ of length $M \le N$ and define $p_t^{(M)}, p_t^{(N)}$ to be the conditional probability paths in dimensionalities $N, M$. Then for any $t, \tilde\bfx \in \mathbb{R}^M$, $p_t^{(N)}(\bfx \verb|[i:i+M]| = \tilde \bfx \mid \bfx_1)$ is equal to $p_t^{(M)}(\bfx = \tilde \bfx \mid \bfx_1 \verb|[i:i+M]|)$.
\end{proposition}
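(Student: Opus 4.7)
The plan is to reduce the claim to a single structural property of the harmonic prior—namely, that its marginal on any contiguous block of residues is again a harmonic prior of the appropriate length—and then observe that the conditional probability path commutes with cropping because the interpolation acts componentwise on residues.

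For the marginalization lemma, I would write $q^{(N)}(\bfx) \propto \prod_{j=1}^{N-1} \exp\bigl[-\tfrac{\alpha}{2}\|\bfx_j - \bfx_{j+1}\|^2\bigr]$ (with $\bfx_j \in \mathbb{R}^3$ denoting the $j$-th residue) and integrate out the residues $\bfx_1,\ldots,\bfx_{i-1}$ together with $\bfx_{i+M},\ldots,\bfx_N$ sequentially from the two ends. Each endpoint residue---say $\bfx_1$---appears only in a single spring term $\exp[-\tfrac{\alpha}{2}\|\bfx_1 - \bfx_2\|^2]$, and by translation invariance of that potential in the integrated variable the integral evaluates to $(2\pi/\alpha)^{3/2}$, a constant independent of the remaining variables. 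Iterating alternately from both ends strips off every spring term outside the crop and leaves exactly the length-$M$ harmonic prior $q^{(M)}$ on $(\bfx_i,\ldots,\bfx_{i+M-1})$.

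With the lemma in hand, the proposition is almost immediate. Sampling $\bfx \sim p_t^{(N)}(\cdot \mid \bfx_1)$ means drawing $\bfx_0 \sim q^{(N)}$ and forming $\bfx = (1-t)\bfx_0 + t\bfx_1$; since the interpolation acts componentwise in residue index, $\bfx[i{:}i{+}M] = (1-t)\bfx_0[i{:}i{+}M] + t\bfx_1[i{:}i{+}M]$. Conditioning on $\bfx_1$ fixes the second summand as a constant offset, so the law of $\bfx[i{:}i{+}M] \mid \bfx_1$ depends on $\bfx_0$ only through $\bfx_0[i{:}i{+}M]$, which by the lemma is distributed as $q^{(M)}$. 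This coincides with the defining construction of $p_t^{(M)}(\cdot \mid \bfx_1[i{:}i{+}M])$, closing the argument.

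The one delicate point is that $q$ is an improper (translation-invariant) density, so $q^{(N)}$ lacks a finite global normalization and the marginalization step must be interpreted either as an equality of unnormalized densities or as the limit of a regularized family (e.g.\ adding a vanishing center-of-mass precision). I expect this bookkeeping, rather than the Gaussian calculation, to be the main technical obstacle, since it also forces one to pin down a consistent convention for interpreting $p_t^{(N)}(\bfx[i{:}i{+}M] = \tilde\bfx \mid \bfx_1)$ as a density on $\mathbb{R}^M$. Once such a convention is adopted, the two steps above close the proof.
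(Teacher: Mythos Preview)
Your proposal is correct and follows essentially the same route as the paper: first establish that the harmonic prior marginalizes on a contiguous crop to the lower-dimensional harmonic prior (you peel off endpoints iteratively, the paper writes the factorization in one shot, but it is the same Gaussian-integral observation), and then use that linear interpolation acts residue-wise so cropping commutes with the noising step. Your remark about the improper, translation-invariant nature of $q$ is a fair caveat that the paper handles only informally with the phrase ``up to some global translation.''
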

\begin{proof}
Our key claim is that for time $t=0$, i.e. in the noise distribution, the density $q^{(N)}(\bfx \verb|[i:i+M]| = \tilde \bfx)$ is equivalent to $q^{(M)}(\bfx = \tilde \bfx)$. The former amounts to marginalizing the density $q^{(N)}(\bfx)$ over the non-crop variables. For simplicity, we proceed with $i=0$; the more general case is very similar:
\begin{align*}
    q^{(N)}\left(\bfx_{[0,M)} = \tilde \bfx\right) &= \int q^{(N)}\left(\bfx_{[0,M)} = \tilde\bfx, \bfx_{[M, N)}\right) \; d\bfx_{[M, N)}\\
    &\propto \int \exp\left[-\frac{\alpha}{2} \left[\sum_{j=0}^{M-2}\lVert \tilde\bfx_j - \tilde \bfx_{j+1} \rVert^2 + \lVert \tilde \bfx_{M-1} - \bfx_M\rVert^2 + \sum_{j=M}^{N-2} \lVert \bfx_j - \bfx_{j+1}\rVert^2\right]\right] d\bfx_{[M, N)} \\
    &= \exp\left[-\frac{\alpha}{2}\sum_{j=0}^{M-2}\lVert \tilde\bfx_j - \tilde \bfx_{j+1} \rVert^2 \right] \underbrace{\int \exp\left[-\frac{\alpha}{2} \left[\lVert \tilde \bfx_{M-1} - \bfx_M\rVert^2 + \sum_{j=M}^{N-2} \lVert \bfx_j - \bfx_{j+1}\rVert^2\right]\right] d\bfx_{[M, N)}}_\text{constant} \\
    &\propto q^{(M)}(\bfx = \tilde \bfx)
\end{align*}
where the constant is an offset Gaussian integral. This equivalence means that---up to some global translation---sampling noise of dimension $N$ and then cropping to length $M$ is equivalent to sampling noise of dimension $M$. Then, notice that linear interpolation of full structures implies linear interpolations of crops:
\begin{equation*}
\bfx = (1-t) \cdot \bfx_0 + t \cdot \bfx_1 \quad \implies \quad \bfx\verb|[i:i+M]| = (1-t) \cdot \bfx_0\verb|[i:i+M]| + t \cdot \bfx_1\verb|[i:i+M]|
\end{equation*}
Thus, the sampling procedure for $p_t^{(N)}(\bfx \verb|[i:i+M]| = \tilde \bfx \mid \bfx_1)$---which is to interpolate $N$-dimensional noise and data and then crop to $M$ dimensions---is the same as the sampling procedure for $p_t^{(M)}(\bfx = \tilde \bfx \mid \bfx_1 \verb|[i:i+M]|)$---which is to first crop the data and noise to $M$ dimensions and then interpolate.
\end{proof}
We note that crop invariance no longer holds in the final form of flow matching that we use in Algorithms~\ref{alg:training} and \ref{alg:inference} and describe in Appendix~\ref{app:flow_matching} due to the RMSD alignment step. Nevertheless, we posit that initial preservation of distributional alignment helps with generalization to proteins of unseen large sizes at inference time.

The second advantage of our flow matching framework over harmonic diffusion is in the treatment of missing residues. In harmonic diffusion, the perturbation kernel $p(\bfx_t \mid \bfx_0)$ is a Gaussian whose mean is given by $\boldsymbol{\mu} = e^{-t\mathbf{H}/2}\bfx_0$, where $\mathbf{H}$ is the drift matrix. This matrix exponential is far from diagonal, meaning that each entry of $\boldsymbol{\mu}$ is dependent on all initial entries of $\bfx_0$. Hence, if there are missing coordinates in $\bfx_0$, they must be imputed in order to sample $p(\bfx_t \mid \bfx_0)$. In contrast, in our flow matching framework, each coordinate in $\bfx$ at time $t$ is a linear combination of only the same-index coordinates in $\bfx_0$ and $\bfx_1$. Hence, we can simply omit the missing residues in the RMSD alignment and the subsequent interpolation.

\section{Experimental Details}\label{app:exp_details}

\subsection{Training and Inference} \label{app:training_inference}

\paragraph{Training} We use OpenFold \citep{ahdritz2022openfold} to train Alpha\textsc{Flow} and ESM\textsc{Flow}, as it closely follows the training best practices described in AlphaFold \citep{jumper2021highly}. However, because the OpenFold weights for AlphaFold were trained with a much later cutoff date, we instead initialize with the original CASP14 weights from DeepMind (version 1). For PDB training data, we use a January 2023 snapshot of the PDB and apply 40\% clustering with MMSeqs2 \citep{steinegger2017mmseqs2}. We train with crops of size 256, batch size of 64, no recycling, and no templates. Alpha\textsc{Flow} is trained on the full set of auxiliary losses, except the structural violation loss and with the FAPE loss squared. ESM\textsc{Flow} is trained on the FAPE, pLDDT, distogram, and supervised $\chi$ losses. To maintain precision in the initial prediction, we set $t=0$ and omit the noisy input in 20\% of training examples. Training progress is monitored via the precision and diversity on a validation set of 183 CAMEO targets deposited Aug--Oct 2022, following \citet{jing2023eigenfold}. To fine-tune on MD ensembles, we resume from the selected checkpoints from the PDB training. All the training settings remain unchanged, except the targets are sampled uniformly at random (with a random conformation), the batch size is set to 8, and $t=0$ is set 10\% of the time. Training progress is monitored via the loss on the validation split.

\paragraph{Training Cost} All training is done on a machine with 8x NVIDIA A100 GPUs and 2x Intel Xeon(R) Gold 6258R processors, with the total training cost shown in Table~\ref{tab:training_cost}.

\begin{table}[H]
    \centering
    \caption{\textbf{Alpha\textsc{Flow} and ESM\textsc{Flow} training cost}}
    \label{tab:training_cost}
    \begin{tabular}{clccc}
    \toprule
    & & \makecell{Total\\hours} & \makecell{Train\\examples} & \makecell{Secs per\\training pass}\\
    \midrule
    \multirow{6}{*}{Alpha\textsc{Flow}}
    & PDB & 267 & 1.28M & 5.8 \\
    & PDB distillation & 105 & 160k & 17.4 \\
    & MD & 11 & 43k & 6.2 \\
    & MD distillation & 28 & 38k & 17.4 \\
    & MD+Templates & 9 & 38k & 6.3 \\
    & MD+Templates distillation & 39 & 51k & 18.0 \\
    \midrule
    \multirow{6}{*}{ESM\textsc{Flow}}
    & PDB & 104 & 720k & 4.2 \\
    & PDB distillation & 37 & 64k & 11.9 \\
    & MD & 5 & 27k & 4.6 \\
    & MD distillation & 34 & 51k & 12.0 \\
    & MD+Templates & 9 & 51k & 4.7 \\
    & MD+Templates distillation & 23 & 38k & 12.5 \\
    \bottomrule
    \end{tabular}
\end{table}

\paragraph{Inference} We run Alpha\textsc{Flow} and ESM\textsc{Flow} with 10 steps by default, evenly spaced from $t=0$ to $t=1$, where the first prediction is performed with no noisy input. However, by merging the first $K>1$ steps, we can reduce the variance of the sampled distribution and increase precision, analogous to increasing MSA depth. This is because---after the initial large step to $t=0.1K$---we are effectively starting the flow from a modified intermediate marginal $p_t(\bfx)$ which differs from the $p_t(\bfx)$ that would arise from properly following the flow:
\begin{align}
    \bfx &= (1-t) \cdot \bfx_0 + t \cdot \bfx_1, \quad \bfx_0 \sim q(\bfx_0), \bfx_1 \sim p_\text{data}(\bfx_1) &\text{(original)} \\
    \bfx &= (1-t) \cdot \bfx_0 + t \cdot \mathbb{E}_{p_\text{data}}[\bfx_1], \quad \bfx_0 \sim q(\bfx_0) &\text{(modified)}
\end{align}
i.e., by stepping directly to intermediate time $t$, we interpolate towards the \emph{initial} $\hat\bfx_1$ prediction, which is a single point estimate of the unconditional expectation, rather than the full distribution $p_\text{data}(\bfx_1)$. We omit recycling for all methods following \citet{del2022sampling}. Note that, by default, AlphaFold accepts a maximum MSA depth equivalent to subsampling with depth 1024, and exhibits a small level of diversity; on the other hand, ESMFold is completely deterministic. For Alpha\textsc{Flow} PDB experiments, we resample the MSA (with depth 1024) for each new sample, but not for each inference step. At inference time, MSAs for all sequences are computed with the ColabFold MMSeqs pipeline \citep{porter2023colabfold}.

\paragraph{Self-conditioning} Although we do not use recycling \emph{per se} for either our methods or the baselines, we employ \emph{self-conditioning} \citep{chen2022analog, stark2023harmonic} in the PDB experiments to increase the precision of Alpha\textsc{Flow}. In particular, at training time, 50\% of supervised forward passes are provided the (gradient-detached) outputs from an initial forward pass of the model; we reuse the recycling embedder of AlphaFold to embed these outputs. At inference time, every forward pass after the first is provided the outputs of the previous forward pass. Note that unlike \citet{stark2023harmonic}, we self-condition with the full set of model output states, i.e., including pair embeddings, rather than just the output $\hat\bfx_1$ prediction. Self-conditioning is omitted for distillation training and for MD training and inference. Finally, although we also trained ESM\textsc{Flow} with self-conditioning, we did not observe any improvements and report results without it.

\paragraph{Distillation} Because the inference process is deterministic except for the initial noisy sample, it defines a map from the noisy distribution to the data distribution. We can aim to learn this map via a model that ingests the noisy sample and predicts the corresponding fully-denoised output in a single forward pass. To train such a model, for each training example (still a crop of 256), we run the full inference pipeline with the original flow model and set the result as the training target. All other training settings are kept the same and training performance is monitored the same way, except the batch size is always set to 8, and the concepts of sampling $t$, interpolating, and self-conditioning no longer apply. For Alpha\textsc{Flow} and ESM\textsc{Flow} on the PDB, we train for 160k and 64k training examples, respectively. For distilling the MD models, we start from the weights of the original Alpha\textsc{Flow}-MD and ESM\textsc{Flow}-MD and fine tune for 38k and 51k training examples, respectively.

\subsection{Datasets} \label{app:datasets}
\paragraph{PDB Test Set} To construct the test set of structurally heterogeneous recent proteins from the PDB, we follow \citet{ellaway2023identifying} and identify chains as representing the same protein if they map to the same segment in the same UniProt reference sequence. We use the SIFTS annotations database \citep{dana2019sifts} and its residue-level mappings from PDB chains to UniProt reference sequences to associate each deposited chain with a segment. Then, we cluster all segments with a Jaccard similarity threshold of 0.75 and complete linkage, with each resulting cluster regarded as a distinct protein, yielding 75k proteins. We collect all proteins which (1) are represented by 2--30 chains deposited after the AlphaFold training cutoff and no chains before the cutoff, (2) have lengths between 256--768 residues, (3) have at least two structural clusters when the chains are clustered with a threshold of 0.85 symmetrized lDDT-C$\alpha$ and complete linkage. From the resulting 563 proteins (represented by 2843 chains), we subsample 100 proteins (represented by 500 chains) to form the test set. At inference time, we run all models using the sequence given by the UniProt segment. The distribution of sequences lengths is shown in Figure~\ref{fig:hist_seqlen}.

\paragraph{MD Dataset} The ATLAS dataset \citep{vander2023atlas} consists of all-atom, explicit solvent MD simulations for 1390  non-membrane proteins, chosen as representatives for all eligible ECOD structural classes \citep{schaeffer2017ecod}. For each protein, 3 replicate simulations of length 100 ns are provided, each with 10k frames. To train and validate on these trajectories, we first generate MSAs for all 1390 ATLAS entries using the provided sequence and the ColabFold MMSeqs2 pipeline \citep{porter2023colabfold}. Then, for the train and validation sets, we extract 300 conformations to be randomly sampled in the training pipeline. The test split consists of all 84 targets whose corresponding PDB entries were deposited after May 1, 2019, minus the two targets with sequence length greater than 1024. The resulting distribution of sequences lengths is shown in Figure~\ref{fig:hist_seqlen}.

\begin{figure}[H]
    \centering
    \includegraphics[width=0.65\textwidth]{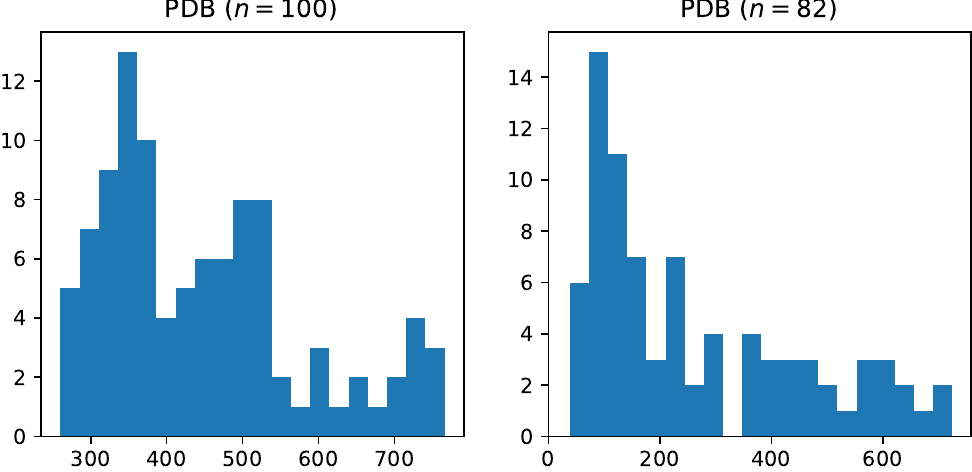}
    \caption{\textbf{Histogram of sequence lengths} in the PDB test set (\emph{left}) and the ATLAS test set (\emph{right}).}
    \label{fig:hist_seqlen}
\end{figure}

\subsection{Evaluation Procedures} \label{app:evaluation}

\paragraph{Symmetrized lDDT} In the PDB experiments, we often need to compute the similarity (or dissimilarity) between two structures which may not share identical sequences, and which may differ significantly in length---for example, between two PDB chains or between a PDB chain and a structure predicted from the UniProt reference sequence. To do so, we define the \emph{symmetrized lDDT} as a variant of lDDT-C$\alpha$ which is (as the name suggests) symmetric and robust to these discrepancies. We perform a pairwise alignment of the two sequences, and tabulate the C$\alpha$ pairs (identified by residue index only) which are within 15 \AA\ of each other in \emph{either structure}. Then, we score the fraction of these selected pairwise distances that are consistent within 0.5 \AA, 1 \AA, 2 \AA, and 4 \AA\ in the two structures. The symmetrized lDDT-C$\alpha$ is the mean of these four scores.

\paragraph{MD Evaluations} To compare a generated ensemble with the ground-truth MD ensemble, we first align both ensembles to the static all-atom structure that initialized the simulation (provided in the ATLAS download). We then perform all analyses using the Euclidean atomic coordinates in MDTraj \citep{mcgibbon2015mdtraj}. For most procedures, we subsample 1000 random MD frames to reduce the analysis runtime. To compute the RMWD, the Wasserstein distance between two 3D Gaussians is given by
\begin{equation}
    \mathcal{W}_2^2(\mathcal{N}(\mu_1, \Sigma_1), \mathcal{N}(\mu_2, \Sigma_2)) = \lVert \mu_1 - \mu_2 \rVert^2 + \Tr\left(\Sigma_1 + \Sigma_2 - 2(\Sigma_1\Sigma_2)^{1/2}\right)
\end{equation}
which reduces to Euclidean distance in the case of point masses. This squared distance decomposes into a translation term and a variance term; so the aggregate RMWD (Equation~\ref{eq:rmwd}) also decomposes as
\begin{equation}
    \RMWD^2(\mathcal{X}_1, \mathcal{X}_2) = \underbrace{\frac{1}{N} \sum_{i=1}^N \lVert \mu_{1,i} - \mu_{2,i} \rVert^2}_{(\text{translation contribution})^2} + \underbrace{\frac{1}{N} \sum_{i=1}^N \Tr\left(\Sigma_{1,i} + \Sigma_{2,i} - 2(\Sigma_{1,i}\Sigma_{2,i})^{1/2}\right)}_{(\text{variance contribution})^2}
\end{equation}
We report the translation contribution (which resembles RMSD) and variance contribution in Table~\ref{tab:md_results}. In the calculation of joint $\mathcal{W}_2$ distance, we first project to the PCA subspace because thermal fluctuations dominate in the full dimensional space and make the $\mathcal{W}_2$ metric unsuitable without an extremely large number of samples. While it is common to perform PCA using the MD reference ensemble alone, we note that doing so can obscure deviations of the predicted ensemble along the orthogonal degrees of freedom. Thus, we repeat the analysis using with the MD ensemble and the equally-weighted pooling of the MD and predicted ensembles. Finally, in the residue exposure analysis, we compute the solvent-accessible surface area (SASA) of each sidechain using the Shrake-Rupley algorithm and a probe radius of 2.8 \AA. Following \citet{porter2019cooperative}, we use a SASA threshold of 2.0 \AA$^2$ to distinguish buried and exposed residues. 

\paragraph{Comparison with Replicate MD} To compare the performance of our method with replicate MD simulations, we leverage the fact that ATLAS trajectories are provided in three replicates (100 ns and 10k frames each). In the main experiments, these three replicates are pooled to collectively represent the MD ensemble; however, such pooling would not be appropriate if one of these replicates is taken for comparison. Instead, we select the \emph{first} replicate for comparison and pool the latter two to represent the ground truth MD ensemble. We emulate different computational budgets by truncating the first trajectory to its first 4096, 2048, 1024, 512, 256, 128, 64, 32, and 16 frames before analysis, respectively representing simulation lengths of 40.96 ns, 20.48 ns, 10.24 ns, 5.12 ns, 2.56 ns, 1.28 ns, 640 ps, 320 ps, and 160 ps. When necessary, we subsample or replicate by the appropriate power of 2 to ensure all analyses operate on 256 frames (important for finite-sample Wasserstein distances). The computational cost in GPU-hrs is estimated by running 1 minute of MD for each protein on a single NVIDIA A100 GPU and noting the average performance in hrs/ns. (The average GPU utilization is 62\%, indicating efficient usage of resources.) For the Alpha\textsc{Flow} and ESM\textsc{Flow} ensembles, we first generate 250 samples as usual and also subsample 128, 64, 32, 16, 8, and 4 samples for analyses, duplicating by the appropriate power 2 to reach 256 ($\approx$ 250) samples. The runtime is provided as an average over all test proteins on a single NVIDIA A100 GPU. 

\paragraph{Comparison with Normal Mode Analysis} We also compare the performance of our method with normal mode analysis of the PDB protein structures using ProDy \citep{bakan2011prody}. We construct Gaussian Network Models (GNM) \citep{bahar1997direct} and Anisotropic Network Models (ANM) \citep{atilgan2001anisotropy} using the C$\alpha$ coordinates and draw 250 samples from each model, keeping all nondegenerate eigenvectors. We use $\Gamma = 0.15$ (adjusted from default to match the average MD RMSF) and default 10 \AA\ and 15 \AA\ cutoffs for GNM and ANM, respectively.

\newpage
\section{Additional Results}

\subsection{PDB Ensembles} \label{app:pdb_results}

Table~\ref{tab:pdb_results} provides precision, recall, and diversity results for the experiments on PDB ensembles, with a median taken over the 100 test set targets. For ESMFold and ESM\textsc{Flow}, the second set of results corresponds to the subset of targets released after the training cutoff of May 1, 2020 ($n=56$). Runtime measurements (per sample) are performed on a single A100 GPU and reported as a median over 100 targets. Figure~\ref{fig:pdb_pca} shows PCA of the true and generated ensembles for several selected targets to illustrate the degradation of the MSA subsampling ensembles. Figure~\ref{fig:pdb_conf_changes} highlights conformational changes observed in the PDB ensembles and correctly sampled by Alpha\textsc{Flow}. In both figures, the PCA is performed by first aligning all PDB sequences with the UniProt reference with ClustalW \citep{larkin2007clustal} and taking the C$\alpha$ positions of the common subset of aligned residues. The structures are then RMSD aligned to a randomly selected PDB structure and the PCA is performed on the resulting Euclidean coordinates. Sample weights are chosen so that the PDB structures account for half the loading, regardless of their number. Coordinates are converted to \AA\ RMSD units.

\begin{table}[h]
    \centering
    \caption{\textbf{Evaluation on PDB ensembles.}}
    \label{tab:pdb_results}
    \begin{tabular}{clcccc}
    \toprule
    & & Precison & Recall & Diversity & Runtime \\
    \midrule
    \multirow{4}{*}{Alpha\textsc{Flow}}
    & Full   & 0.810 & 0.801 & 0.185 & 69.6 \\
    & 5 steps  & 0.821 & 0.801 & 0.151 & 42.1 \\
    & 2 steps  & 0.839 & 0.811 & 0.082 & 21.3 \\
    & Distilled & 0.831 & 0.810 & 0.128 & 7.4 \\
    \midrule
    \multirow{5}{*}{\makecell{MSA\\subsampling}}
    & 512 & 0.849 & 0.823 & 0.026 & 5.5 \\
    & 256 & 0.844 & 0.818 & 0.044 & 4.2 \\
    & 128 & 0.835 & 0.806 & 0.053 & 3.9 \\
    & 64 & 0.795 & 0.784 & 0.088 & 3.6 \\
    & 48 & 0.757 & 0.760 & 0.125 & 3.5 \\
    \midrule
    AlphaFold & & 0.850 & 0.823 & 0.026 & 7.7 \\
    \midrule
    \multirow{4}{*}{ESM\textsc{Flow}}
    & Full & 0.777 / 0.777 & 0.777 / 0.765 & 0.210 / 0.213 & 30.4 \\
    & 5 steps  & 0.787 / 0.788 & 0.772 / 0.767 & 0.166 / 0.174 & 18.3 \\
    & 2 steps  & 0.795 / 0.797 & 0.774 / 0.760 & 0.100 / 0.102 & 9.2 \\
    & Distilled & 0.775 / 0.774 & 0.752 / 0.745 & 0.152	/ 0.152 & 3.1 \\
    \midrule
    ESMFold & & 0.806 / 0.809 & 0.764 / 0.761 & 0.000 & 3.2 \\
    \bottomrule
    \end{tabular}
\end{table}

\begin{figure}[h!]
    \centering
    \includegraphics[height=0.35\textwidth]{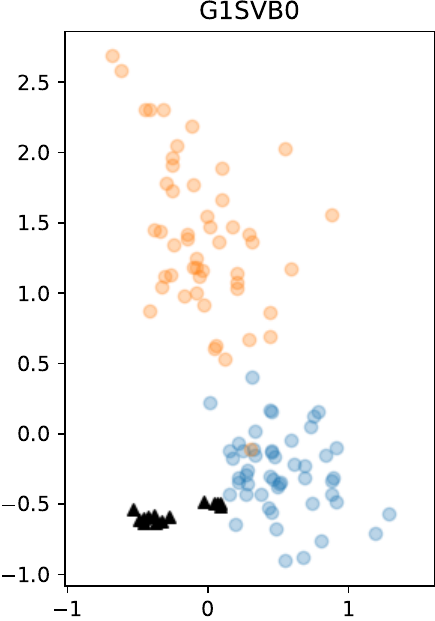}
    \includegraphics[height=0.35\textwidth]{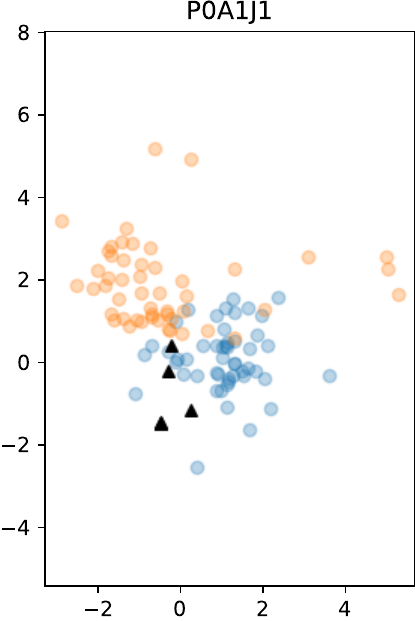}
    \includegraphics[height=0.35\textwidth]{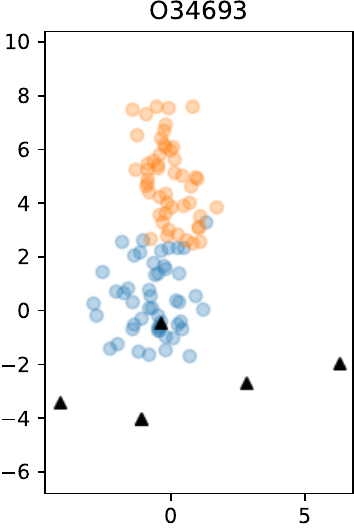}
    \includegraphics[height=0.35\textwidth]{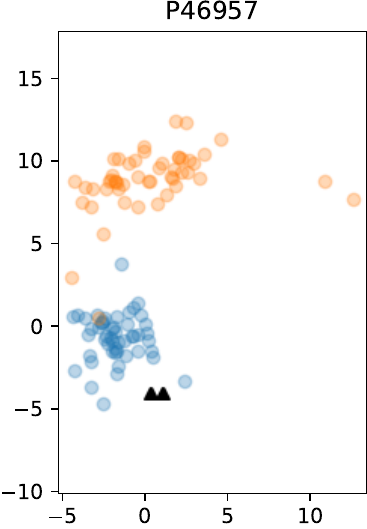}
    \caption{\textbf{PCA of PDB and predicted ensembles} from Alpha\textsc{Flow} (\textcolor[HTML]{1f77b4}{blue}) and MSA subsampling (depth 64) (\textcolor[HTML]{ff7f0e}{orange}), with PDB structures marked by $\blacktriangle$. The MSA subsampling ensembles have similar diversity as the Alpha\textsc{Flow} ensembles but drift away from the true structures.}
    \label{fig:pdb_pca}
\end{figure}

\clearpage

\begin{figure}[h]
    \centering
    \begin{tikzpicture}
    \node[anchor=south west,inner sep=0] (image) at (0,0) {\includegraphics[width=0.95\textwidth]{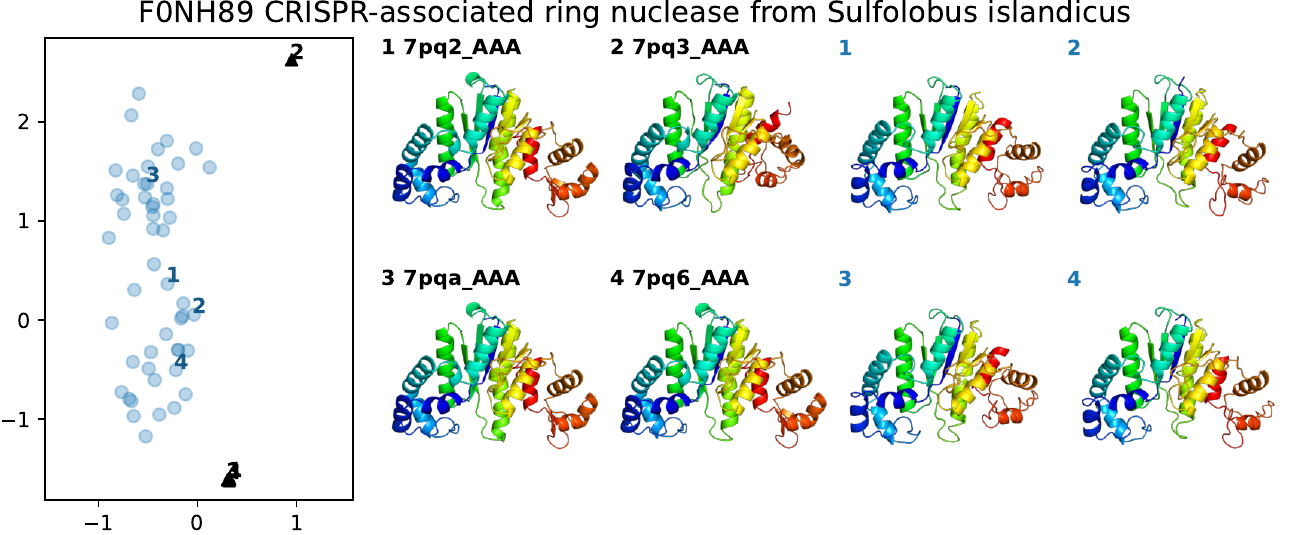}};
    \begin{scope}[x={(image.south east)},y={(image.north west)}]
        \draw[black] (0.43,0.69) circle [radius=15px];
        \draw[black] (0.60,0.69) circle [radius=15px];
        \draw[black] (0.78,0.25) circle [radius=15px];
        \draw[black] (0.95,0.25) circle [radius=15px];
    \end{scope}
    \end{tikzpicture}
    \begin{tikzpicture}
    \node[anchor=south west,inner sep=0] (image) at (0,0) {\includegraphics[width=0.95\textwidth]{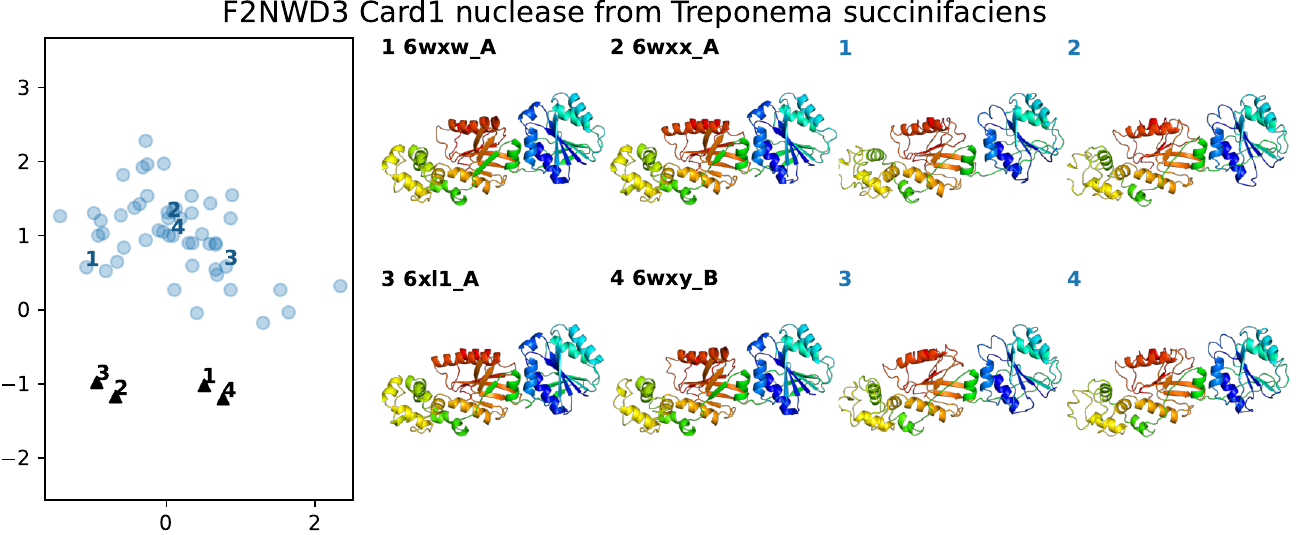}};
    \begin{scope}[x={(image.south east)},y={(image.north west)}]
        \draw[black] (0.36,0.76) circle [radius=15px];
        \draw[black] (0.36,0.32) circle [radius=15px];
        \draw[black] (0.71,0.76) circle [radius=15px];
        \draw[black] (0.71,0.32) circle [radius=15px];
    \end{scope}
    \end{tikzpicture}
    \begin{tikzpicture}
    \node[anchor=south west,inner sep=0] (image) at (0,0) {\includegraphics[width=0.95\textwidth]{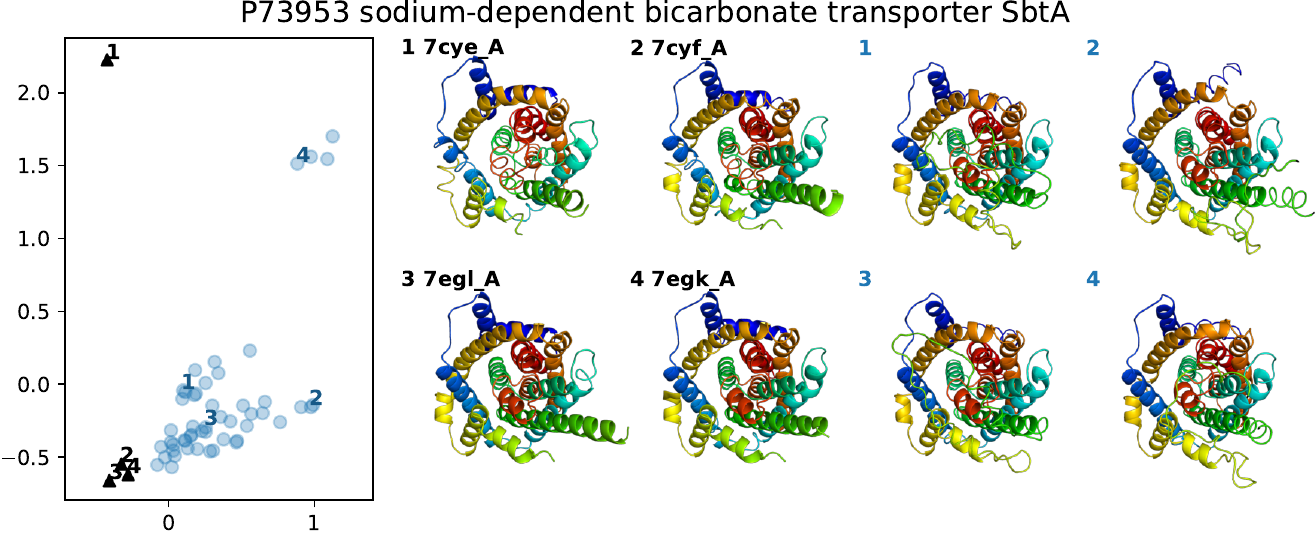}};
    \begin{scope}[x={(image.south east)},y={(image.north west)}]
        \draw[black,ultra thick] (0.4,0.72) circle [radius=15px];
        \draw[black,ultra thick] (0.57,0.72) circle [radius=15px];
        \draw[black,ultra thick] (0.74,0.72) circle [radius=15px];
        \draw[black,ultra thick] (0.92,0.28) circle [radius=15px];
    \end{scope}
    \end{tikzpicture}
    \caption{\textbf{PDB conformational changes} correctly sampled by Alpha\textsc{Flow}. For each UniProt ID, the PCA plot shows the complete set of PDB structures ($\blacktriangle$) and Alpha\textsc{Flow} samples (\textcolor[HTML]{1f77b4}{blue}). Four random PDB structures (\emph{left}) and four random Alpha\textsc{Flow} samples (\emph{right}) are visualized, where the numbers label the positions of the selected structures.}
    \label{fig:pdb_conf_changes}
\end{figure}

\clearpage
\subsection{MD Ensembles} \label{app:md_results}
Table~\ref{tab:esm_md_results} provides the evaluation of ESM\textsc{Flow} on MD ensembles. In Table~\ref{tab:md_results_app}, we report the performance of Alpha\textsc{Flow}-MD with ablated training procedures, and the comparison of Alpha\textsc{Flow} with normal mode analysis conducted on the PDB structure. In Figures~\ref{fig:md_rmsf}--\ref{fig:md_cyptic_residues}, we provide additional visualizations for the RMSF, transient contacts, weak contacts, and solvent exposure analyses of Alpha\textsc{Flow}-MD ensembles. Finally, In Figure~\ref{fig:md_convergence_app}, we provide additional convergence results for Alpha\textsc{Flow}-MD+Templates vs replicate MD simulations.
\vfill

\begin{table}[h!]
    \centering
    \caption{\textbf{Evaluation of ESM\textsc{Flow} on MD ensembles}}
    \label{tab:esm_md_results}
    \begin{small}
    \begin{tabular}{clccc|cc}
    \toprule
    & & \multicolumn{2}{c}{ESM\textsc{Flow}-MD} & & \multicolumn{2}{c}{EFMD+Templates}\\ 
    \cmidrule(lr){3-4} \cmidrule(lr){6-7}
    & & Full & Distilled & ESMFold & Full & Distilled\\
    \midrule
    \multirow{5}{*}{\makecell{Predicting\\flexibility}}
    & Pairwise RMSD ($=$2.90) & 3.25 & 2.76	& 0.00 & 2.00 & 1.42 \\
    & Pairwise RMSD $r$ $\uparrow$ & 0.19 & 0.19 & --- & 0.85 & 0.80\\
    & All-atom RMSF ($=$1.70) & 2.16 & 2.12 & 0.00 & 1.07 & 0.80 \\
    & Global RMSF $r$ $\uparrow$ & 0.31 & 0.33 & --- & 0.84 & 0.79\\
    & Per-target RMSF $r$ $\uparrow$ & 0.76	& 0.74 & --- & 0.90 & 0.87\\
    \midrule
    \multirow{6}{*}{\makecell{Distributional\\accuracy}} 
    & Root mean $\mathcal{W}_2$-dist. $\downarrow$ & 3.60 &	4.23 &	4.60 & 	2.17 &	2.27\\
    & $\hookrightarrow$ Translation contrib. $\downarrow$ & 3.13 &	3.75 &	3.65 &	1.66	& 1.70\\
    & $\hookrightarrow$ Variance contrib. $\downarrow$ & 1.74 &	1.90 &	2.50 &	1.07	& 1.35 \\
    & MD PCA $\mathcal{W}_2$-dist. $\downarrow$ & 1.51 & 1.87 & 1.69 & 1.44 & 1.48\\
    & Joint PCA $\mathcal{W}_2$-dist. $\downarrow$  & 3.19 & 3.79 & 3.87 & 1.70 & 1.81\\
    & \% PC-sim $>0.5$ $\uparrow$ & 26	& 33	& ---	& 49 & 	40 \\
    \midrule
    \multirow{4}{*}{\makecell{Ensemble\\observables}}  
    & Weak contacts $J$ $\uparrow$ & 0.55& 	0.48	& 0.22 & 	0.59 & 0.48\\
    & Transient contacts $J$ $\uparrow$ & 0.34	&0.30&	0.15&	0.47&	0.41\\
    & Exposed residue $J$ $\uparrow$ & 0.49	& 0.43	&0.28	&0.50&	0.44\\
    & Exposure MI matrix $\rho$ $\uparrow$ & 0.20	& 0.16 & --- & 0.22 & 0.16\\
    \bottomrule
    \end{tabular}
    \end{small}
\end{table}

\vfill

\begin{table}[h!]
    \centering
    \caption{\textbf{Ablations and normal mode analysis on MD ensembles}. The ablations verify the importance of the two-step training procedure for Alpha\textsc{Flow}+MD. Normal mode analysis (NMA) often fails to outperform baseline Alpha\textsc{Flow}+MD despite having access to the ground truth PDB structure, and significantly underperforms Alpha\textsc{Flow}+MD+Templates when it is provided the same PDB template structure. $^\star$Note that NMA results for RMSF and RMWD are C$\alpha$-only rather than all-atom, which likely overestimates the performance. GNM: Gaussian Network Model; ANM: Anisotropic Network Model.}
    \label{tab:md_results_app}
    \begin{small}
    \begin{tabular}{clccc|ccc}
    \toprule
    & & & \multicolumn{2}{c}{Ablations} & & \multicolumn{2}{c}{NMA}\\ 
    \cmidrule(lr){4-5} \cmidrule(lr){7-8}
    & & Baseline & \makecell{No ATLAS\\finetuning} & \makecell{No PDB\\pretraining} & \makecell{AFMD\\+Templates} & GNM & ANM \\
    \midrule
    \multirow{5}{*}{\makecell{Predicting\\flexibility}}
    & Pairwise RMSD ($=$2.90)   & \textbf{2.89} & 2.41 & 3.04 & 2.18 & 1.85 &  2.36 \\
    & Pairwise RMSD $r$ $\uparrow$ & \textbf{0.48} & 0.34 & 0.29 & \textbf{0.94} & 0.71 & 0.65 \\
    & All-atom RMSF ($=$1.70) & \textbf{1.68} & 1.25 & 1.81 & 1.31 & 1.22$^\star$ & 1.35$^\star$ \\
    & Global RMSF $r$ $\uparrow$ & \textbf{0.60} & 0.48 & 0.45 & \textbf{0.91} & 0.64$^\star$ & 0.55$^\star$ \\
    & Per-target RMSF $r$ $\uparrow$  &       \textbf{0.85} & 0.82 & 0.83 &  \textbf{0.90} & 0.72$^\star$ & 0.76$^\star$ \\
    \midrule
    \multirow{6}{*}{\makecell{Distributional\\accuracy}} 
    & Root mean $\mathcal{W}_2$-dist. $\downarrow$ &       \textbf{2.61} & 2.96 & 3.11 & \textbf{1.95} & 2.47$^\star$ & 2.54$^\star$ \\
    & $\hookrightarrow$ Translation contrib. $\downarrow$  &       \textbf{2.28} & 2.52 & 2.71 & \textbf{1.64} & 2.07$^\star$ & 2.09$^\star$\\
    & $\hookrightarrow$ Variance contrib. $\downarrow$ &       \textbf{1.30} & 1.36 & 1.44 & \textbf{1.01} & 1.33$^\star$ & 1.27$^\star$ \\
    & MD PCA $\mathcal{W}_2$-dist. $\downarrow$ &       \textbf{1.52} & 1.64 & 1.59 & \textbf{1.25} & 1.84 & 1.73 \\
    & Joint PCA $\mathcal{W}_2$-dist. $\downarrow$  &       \textbf{2.25} & 2.60 & 2.67 & \textbf{1.58} & 2.44 & 2.35\\
    & \% PC-sim $>0.5$ $\uparrow$  &       \textbf{44} & 35 & 38 & \textbf{44} & 13 & 23 \\
    \midrule
    \multirow{4}{*}{\makecell{Ensemble\\observables}}  
    & Weak contacts $J$ $\uparrow$ &       \textbf{0.62} & 0.48 & 0.60 & \textbf{0.62} & 0.45 & 0.40 \\
    & Transient contacts $J$ $\uparrow$ &       \textbf{0.41} & 0.36 & 0.39 & \textbf{0.47} & 0.25 & 0.25\\
    & Exposed residue $J$ $\uparrow$ &       \textbf{0.50} & 0.40 & \textbf{0.50} & 0.50 & --- & --- \\
    & Exposure MI matrix $\rho$ $\uparrow$ &       \textbf{0.25} & 0.18 & \textbf{0.25} & 0.25 & --- & --- \\
    \bottomrule
    \end{tabular}
    \end{small}
\end{table}

\vfill
\clearpage

\begin{figure}
    \centering
    \includegraphics[width=0.49\textwidth]{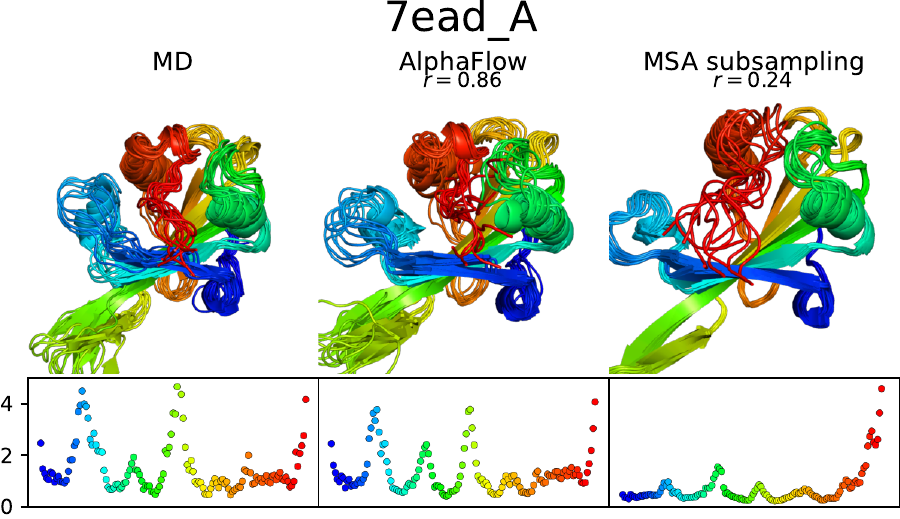}
    \includegraphics[width=0.49\textwidth]{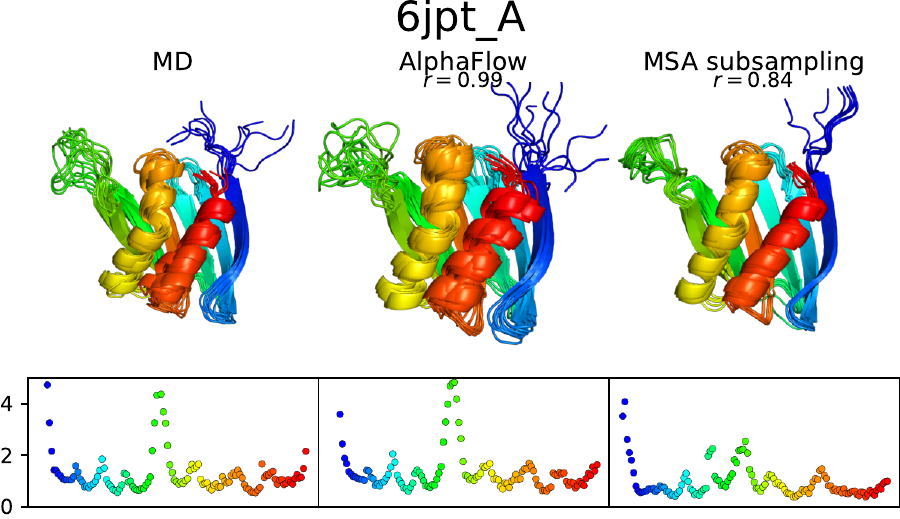} \\
    \vspace{12pt}
    \includegraphics[width=0.49\textwidth]{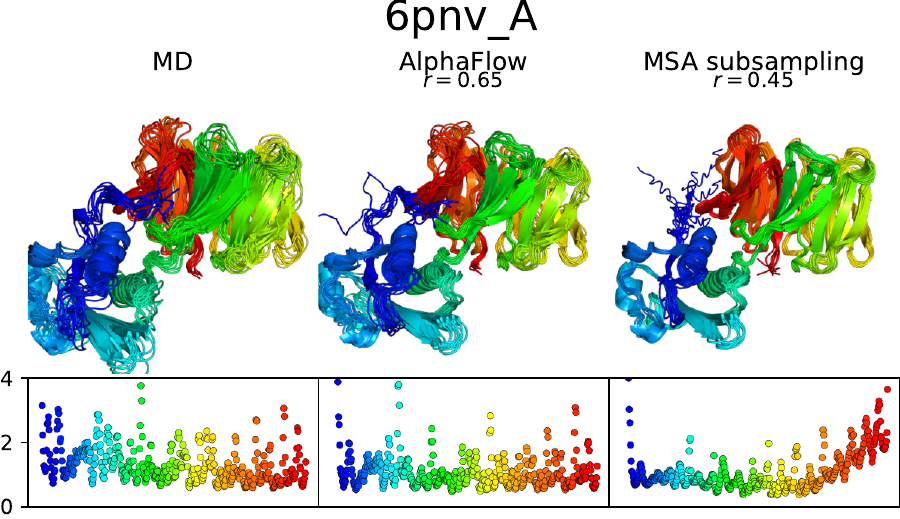} 
    \includegraphics[width=0.49\textwidth]{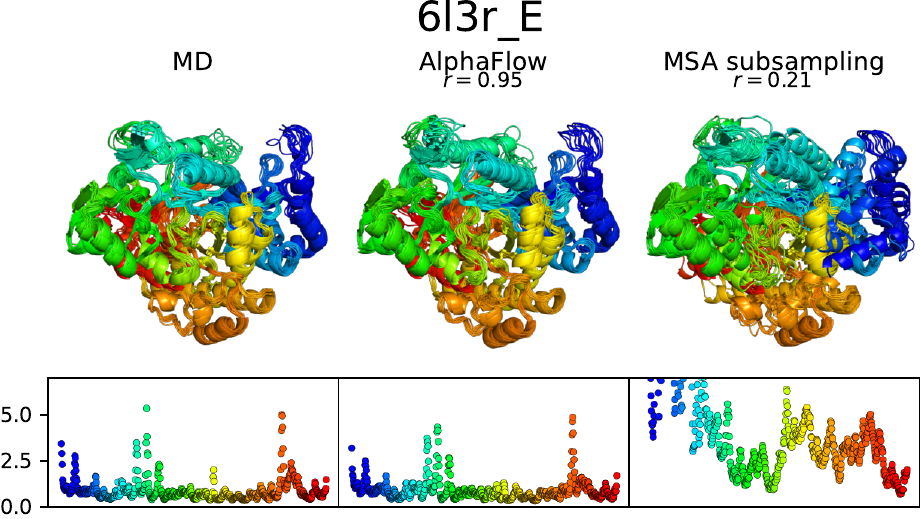}
    \caption{\textbf{Visualization of ensembles and their RMSF plots}. For each PDB ID, 10 samples from the MD, Alpha\textsc{Flow}, and MSA subsampling (depth 48) ensembles are shown, with RMSF by residue index in insets. For the latter two, the Pearson correlation coefficient ($r$) with the MD RMSF is reported.}
    \label{fig:md_rmsf}
\end{figure}

\clearpage

\begin{figure}
    \centering
    \includegraphics[width=\textwidth]{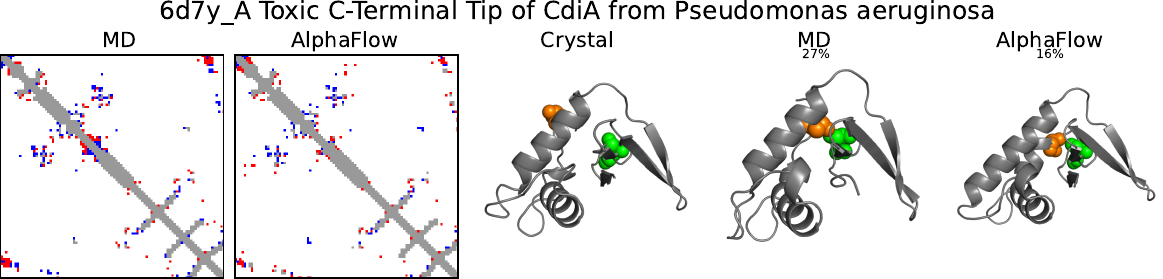}\\
    \vspace{12pt}
    \includegraphics[width=\textwidth]{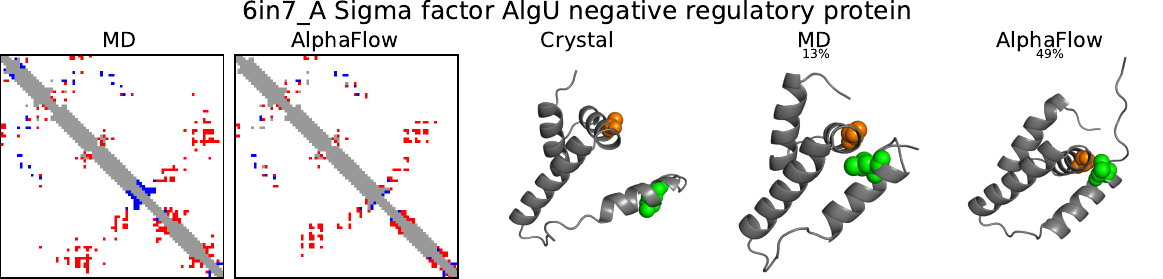}\\
    \vspace{12pt}
    \includegraphics[width=\textwidth]{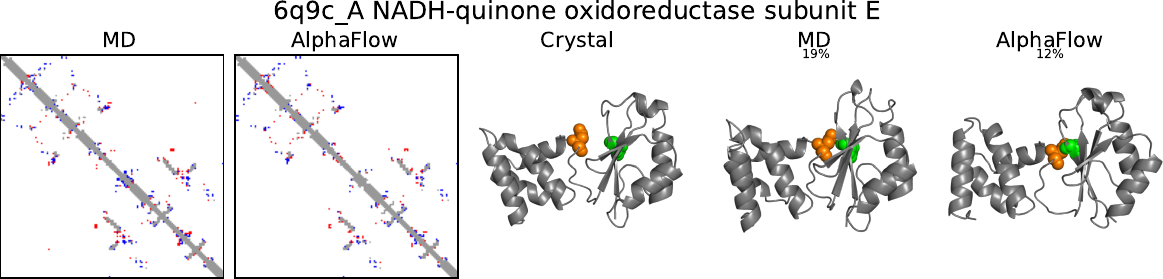}\\
    \vspace{12pt}
    \includegraphics[width=\textwidth]{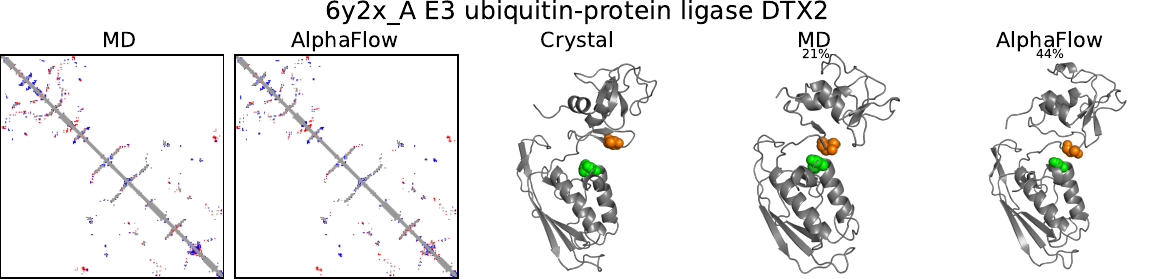}\\
    \vspace{12pt}
    \caption{\textbf{Visualization of transient contacts.} For each PDB ID, the contact maps from MD simulation and Alpha\textsc{Flow} are shown, with normal contacts in \textcolor{gray}{gray}, weak contacts in \textcolor{blue}{blue}, and transient contacts in \textcolor{red}{red}. Among the the transient contacts correctly identified by Alpha\textsc{Flow}, one is selected for visualization: the two residues are highlighted in the crystal structure (\emph{left}), a frame from the MD simulation (\emph{middle}) where they are in contact, and an Alpha\textsc{Flow} sample where they are in contact. The probability of occurence in each ensemble is shown.}
    \label{fig:md_transient_contacts}
\end{figure}

\clearpage

\begin{figure}
    \centering
    \includegraphics[width=\textwidth]{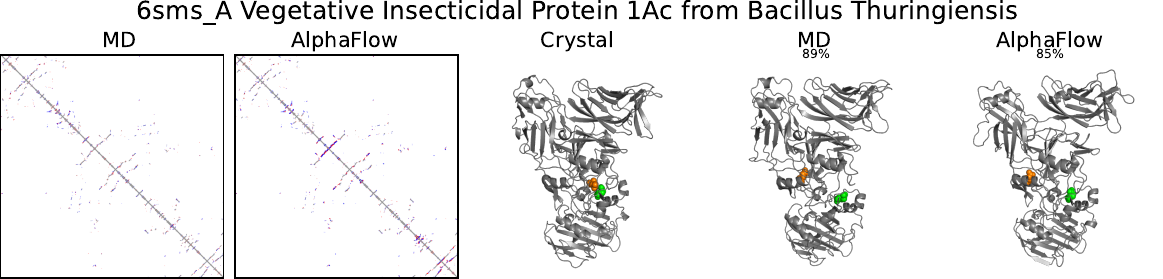}\\
    \vspace{12pt}
    \includegraphics[width=\textwidth]{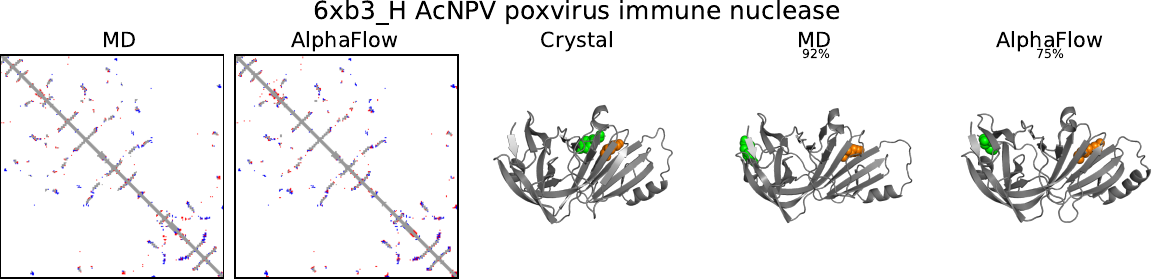}\\
    \vspace{12pt}
    \includegraphics[width=\textwidth]{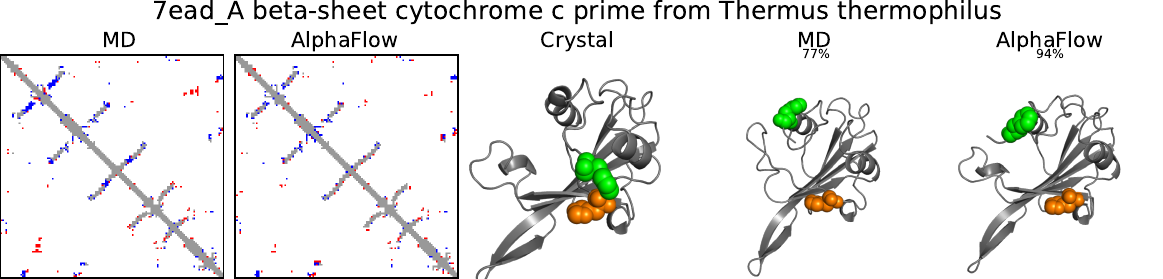}\\
    \vspace{12pt}
    \includegraphics[width=\textwidth]{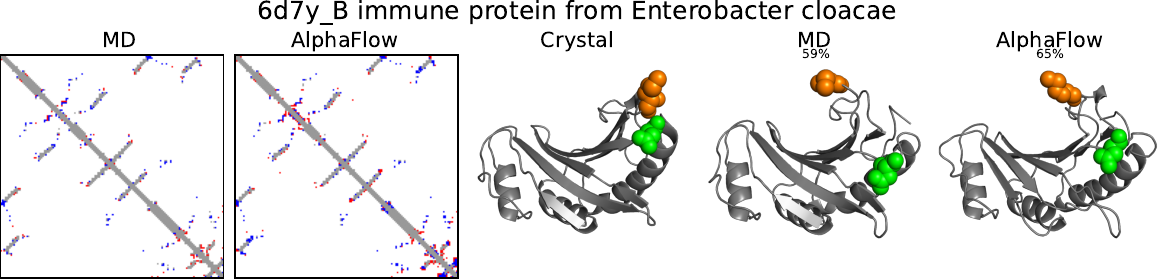}\\
    \vspace{12pt}
    \caption{\textbf{Visualization of weak contacts.} For each PDB ID, the contact maps from MD simulation and Alpha\textsc{Flow} are shown, with normal contacts in \textcolor{gray}{gray}, weak contacts in \textcolor{blue}{blue}, and transient contacts in \textcolor{red}{red}. Among the the weak contacts correctly identified by Alpha\textsc{Flow}, one is selected for visualization: the two residues are highlighted in the crystal structure (\emph{left}), a frame from the MD simulation (\emph{middle}) where they are not in contact, and an Alpha\textsc{Flow} sample where they are not in contact. The probability of occurence in each ensemble is shown.}
    \label{fig:md_weak_contacts}
\end{figure}

\clearpage

\begin{figure}
    \centering
    \includegraphics[width=0.9\textwidth]{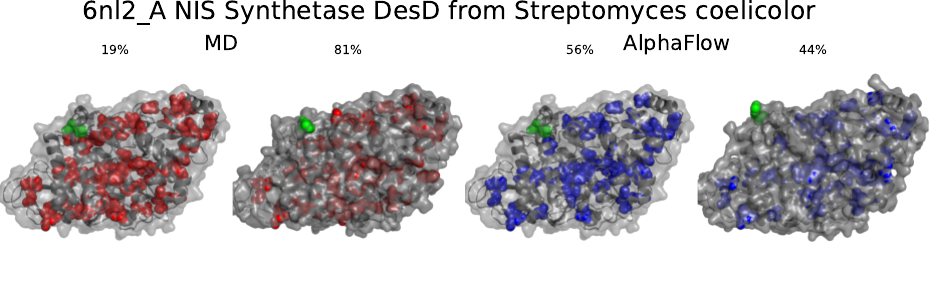}\\
    \vspace{12pt}
    \includegraphics[width=0.9\textwidth]{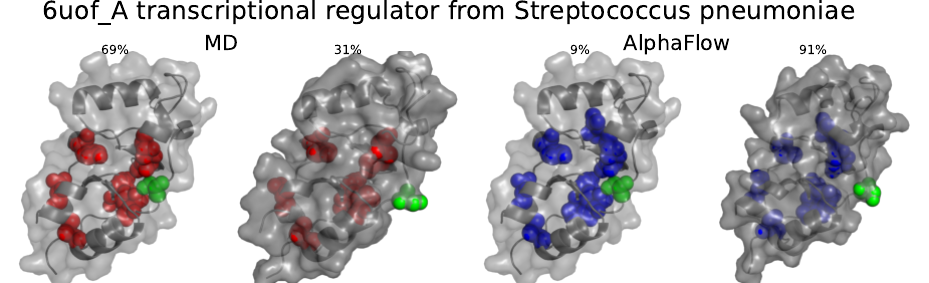}\\
    \vspace{12pt}
    \includegraphics[width=0.9\textwidth]{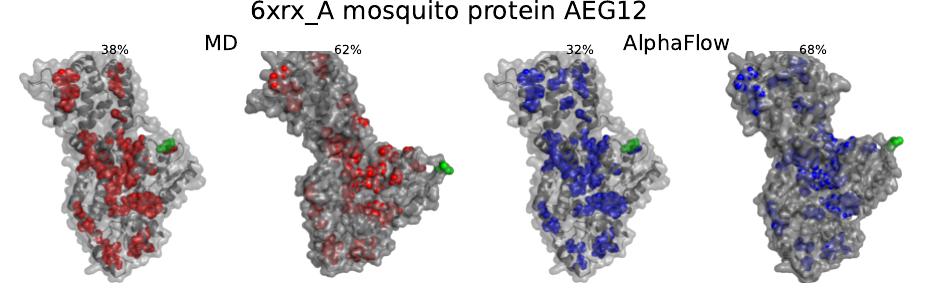}\\
    \vspace{12pt}
    \includegraphics[width=0.9\textwidth]{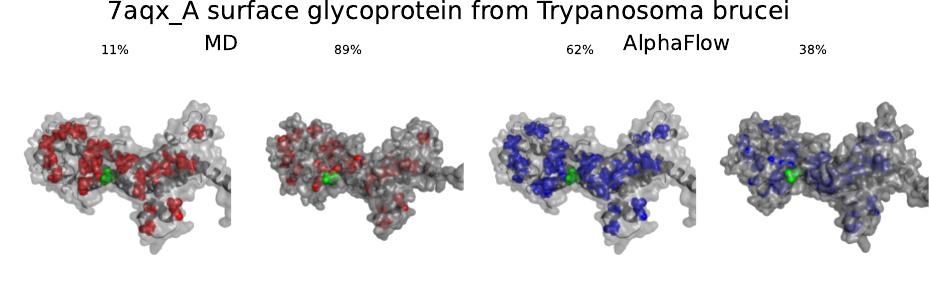}\\ 
    \vspace{12pt}
    \caption{\textbf{Visualization of cryptic exposed residues.} For each PDB ID, in the \emph{left} pair of structures, the set of true cryptic exposed residues (from MD) is colored \textcolor{red}{red}; in the \emph{right} pair the set identified from Alpha\textsc{Flow} ensembles is colored \textcolor{blue}{blue}. A common identified residue is selected and highlighted in \textcolor{green}{green}. For each pair, the \emph{left} structures shows the residues buried in the crystal structure whereas the \emph{right} structure shows a frame (or sample) where the highlighted residue is exposed to the solvent. The probability of occurence in each ensemble is shown.}
    \label{fig:md_cyptic_residues}
\end{figure}

\clearpage

\begin{figure}
    \centering
    \includegraphics[width=0.75\textwidth]{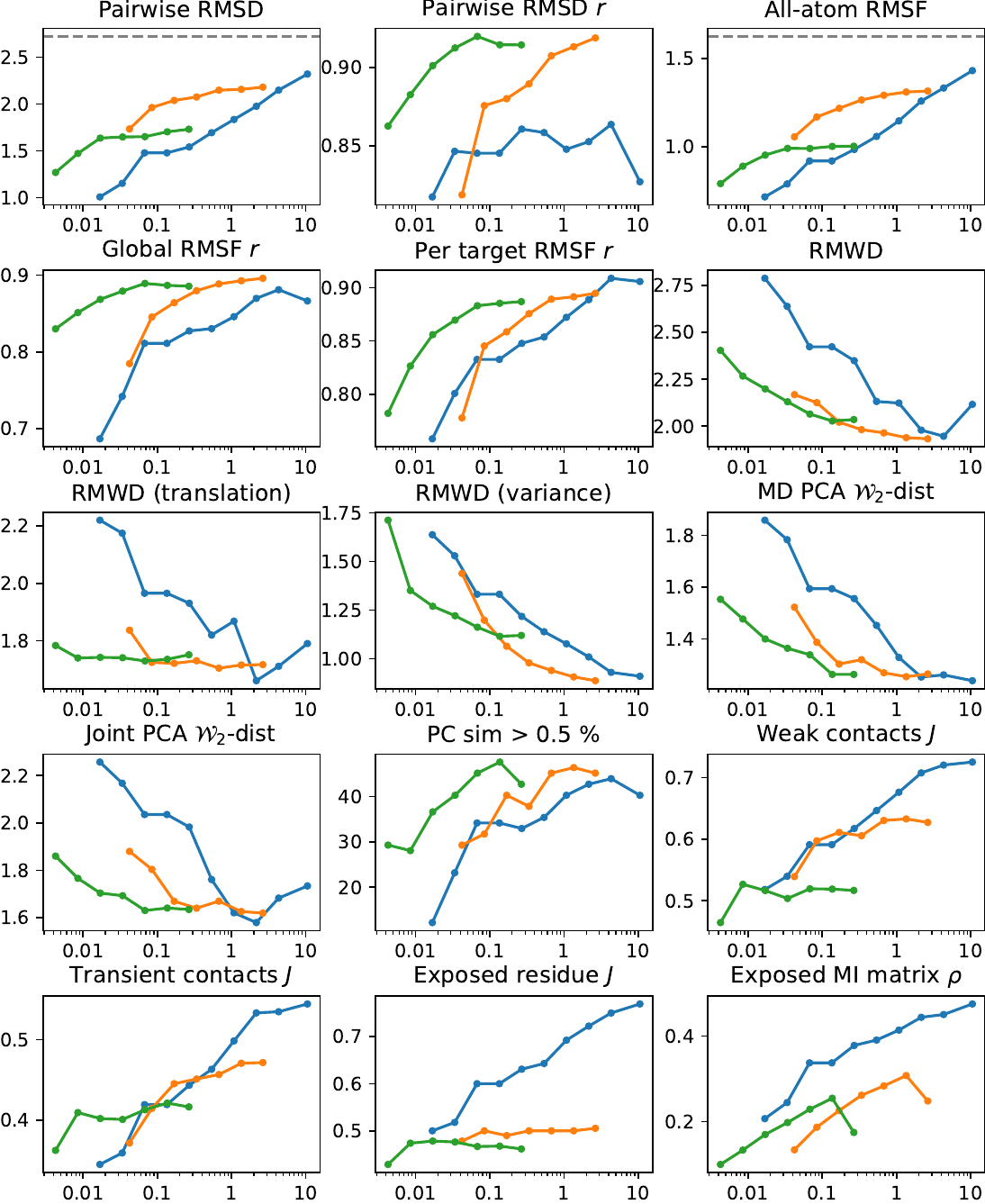}
    \caption{\textbf{Efficiency of Alpha\textsc{Flow} vs replicate MD simulations.} Alpha\textsc{Flow} (with templates) with varying number of samples in \textcolor[HTML]{ff7f0e}{orange}; Alpha\textsc{Flow} distilled into a single forward pass in \textcolor[HTML]{2ca02c}{green}; MD with varying trajectory lengths in \textcolor[HTML]{1f77b4}{blue}. For Pairwise RMSD and RMSF, the values from the reference MD (i.e., pooling the remaining two replicates) are shown as horizontal dashed lines. The $x$-axis reports runtime in GPU-hrs averaged over targets. For MD, the average runtime is 6.3 mins / ns; for Alpha\textsc{Flow}, the average runtime per sample is 38 s without distillation and 3.8 s with distillation. See Appendix~\ref{app:evaluation} for further benchmarking details.}
    \label{fig:md_convergence_app}
\end{figure}

\clearpage


\end{document}